\theoremstyle{definition}
\newtheorem{theorem}{Theorem}[section]
\newtheorem{lemma}[theorem]{Lemma}
\newtheorem{prop}[theorem]{Proposition}
\newtheorem{definition}[theorem]{Definition}
\newtheorem{cor}[theorem]{Corollary}
\theoremstyle{remark}
\newtheorem{rem}[theorem]{Remark}
\newtheorem*{acknowledgment}{Acknowledgments}
\DeclareMathOperator{\Q}{\mathbb{Q}}
\DeclareMathOperator{\N}{\mathbb{N}}
\DeclareMathOperator{\Z}{\mathbb{Z}} 
\DeclareMathOperator{\EE}{\mathbb{E}} % quantum dilogarithm
\DeclareMathOperator{\ZZ}{\mathcal{Z}} % noncomm valued partition q-series
\DeclareMathOperator{\A}{\mathbb{A}}
\DeclareMathOperator{\Ah}{\widehat{\mathbb{A}}}
\DeclareMathOperator{\id}{id} 
\DeclareMathOperator{\sgn}{sgn}
\newcommand{\seq}[1]{{\boldsymbol{#1}}}
\newcommand{\vev}[1]{{\langle#1\rangle}}
\newcommand{\bvev}[1]{{\bigl\langle#1\bigr\rangle}}
\title{Quantum dilogarithms and partition $q$-series}
\author{Akishi~Kato}%
\address{Graduate School of Mathematical Sciences, 
The University of Tokyo, 
3-8-1 Komaba, Meguro-ku, Tokyo 153-8914, Japan.}
\email{akishi@ms.u-tokyo.ac.jp}
\author{Yuji~Terashima}%
\address{Graduate School of Information Science and Engineering, 
Tokyo Institute of Technology, 
2-12-1 Ookayama, Meguro-ku, Tokyo 152-8550, Japan.}
\email{tera@is.titech.ac.jp}
\date{2014-11-19}
\numberwithin{equation}{section}
\begin{document}

\begin{abstract}
 In our previous work \cite{KT2014}, we introduced the partition
 $q$-series for mutation loop $\gamma$ --- a loop in exchange quiver.
 In this paper, we show that for 
 certain class of mutation sequences, 
 called reddening sequences,
 % so-called reddening mutation sequences, 
 graded version of partition $q$-series essentially coincides with the
 ordered product of quantum dilogarithm associated with each mutation;
 the partition $q$-series provides a state-sum description of
 combinatorial Donaldson-Thomas invariants introduced by B. Keller.
\end{abstract}

\maketitle

%\tableofcontents

\section{Introduction}

Kontsevich and Soibelman's groundbreaking work
\cite{Kontsevich2008,Kontsevich2010} introduced some completely new
ideas and techniques into the BPS state counting problems in
physics. Their work as well as Nagao \cite{Nagao2011,Nagao2013} and
Reineke \cite{Reineke2010,Reineke2011} motivated Keller
\cite{Keller2011,Keller2012,Keller2013a} to study the product of quantum
dilogarithms along a quiver mutation sequence. He showed that it is
independent of the choice of a reddening mutation sequence and is an
important invariant of a quiver which he called the combinatorial
Donaldson-Thomas (DT) invariant.

In our previous work \cite{KT2014}, we introduced the partition
$q$-series for a mutation loop. A mutation loop is a mutation sequence
supplemented by a boundary condition which specifies how the vertices of
the initial and the final quiver are identified. One of our motivation
is to provide a solid mathematical foundation to extract an essential
information of the partition function of a 3-dimensional gauge
theory. In particular, we showed for a special sequence of a Dynkin
quiver or square product thereof, the partition $q$-series reproduce
so-called fermionic character formulas of certain modules associated
with affine Lie algebras.

In this paper, we analyze the relationship between partition $q$-series
and the combinatorial DT-invariants. For that purpose, we refined the
definition of our partition $q$-series by introducing a (noncommutative)
grading and making it sensitive to ``orientation'' (green or red) of
each mutation.

The main result of this paper is summarized as follows (see Theorem
\ref{thm:main} for more precise statement): for any reddening
sequence (= a mutation sequence for which the combinatorial DT invariant
is defined), the refined version of the partition $q$-series
coincides with the combinatorial DT invariant (up to involution
$q\leftrightarrow q^{-1}$). Therefore, the partition $q$-series provide
``state-sum'' description of combinatorial DT-invariants that are given
in ``operator formalism''.

The paper is organized as follows. In Section \ref{sec:museq},
we recall some basic concepts of quiver mutation sequences and
$c$-vectors.  In Section \ref{sec:Zdef}, we introduce the (refined
version of) the partition $q$-series $\ZZ(\gamma)$ for the mutation loop
$\gamma$.  In Section \ref{sec:backtracking} we show that the partition
$q$-series are invariant under insertion/deletion of
backtracking. Section \ref{comb-DT} summarizes the basic facts about
reddening sequences and combinatorial DT-invariants. Section
\ref{sec:Z-and-DT} is the main part of this paper; we prove that for any
reddening mutation sequence, the partition $q$-series essentially
coincides with the combinatorial DT invariant. The final section is
devoted to some explicit computation of partition $q$-series for various
type of quivers.
\begin{acknowledgment}
 We would like to thank H. Fuji, K. Hikami, A. Kuniba, R. Inoue,
 J. Suzuki, S. Terashima, O. Warnaar and M. Yamazaki for for helpful
 discussion. 
 We would also like to thank the anonymous referee for
 their valuable comments and suggestions to improve the
 quality of the paper.
 This work was partially supported by Japan Society for the
 Promotion of Science (JSPS), Grants-in-Aid for Scientific Research
 Grant (KAKENHI) Number 23654079 and 25400083. This paper is dedicated
 to the memory of Kentaro Nagao.
\end{acknowledgment}

\section{Quiver mutation sequences}
\label{sec:museq}

\subsection{Quivers and  mutation sequences}
\label{sec:notation}

A \emph{quiver} $Q=(Q_{0},Q_{1})$ is an oriented graph with the set
$Q_{0}$ of arrows and the set $Q_{1}$ of vertices. In this paper all
quivers are assumed to be finite connected oriented graphs without loops
or 2-cycles:
\begin{equation*}
 \text{loop} \quad \vcenter{
  \xymatrix @R=8mm @C=8mm
  @M=2pt{\bullet \ar@(ur,dr)[]}}  \qquad \qquad
  \text{2-cycle}\quad \vcenter{ \xymatrix @R=8mm @C=8mm @M=2pt{\bullet
  \ar@/^/[r] & \bullet \ar@/^/[l]} }.
\end{equation*}
Throughout the paper, we identify the set of vertices $Q_{0}$
with $\{1,2,\dots,n\}$. By a slight abuse of notation, we denote by
\begin{equation}
 \label{eq:Qij-def}
 Q_{ij}:=\#\{ (i\to j)\in Q_{1}\}
\end{equation}
the multiplicity of the arrow, and consider them as entries of an
$n\times n$ matrix.  There is a bijection
\begin{equation*}
\left\{
 \text{
 \parbox[c]{.35\textwidth}{the quivers without loops or $2$-cycles, 
 $Q_{0}\stackrel{\sim}{\rightarrow} \{1, \ldots, n\}$}
}
\right\}
\longleftrightarrow
\left\{
 \text{
 \parbox[c]{.35\textwidth}{the skew-symmetric integer $n\times
 n$-matrices $B$}
}
\right\}
\end{equation*}
via 
\begin{equation}
B_{ij}=Q_{ij}-Q_{ji},\qquad Q_{ij}=\max(B_{ij},0).
\end{equation}

For a quiver $Q$ and its vertex $k$, the \emph{mutated quiver}
$\mu_k(Q)$ is defined \cite{Fomin2002}: it has the same set of vertices
as $Q$; its set of arrows is obtained from that of $Q$ as follows:
\begin{itemize}
 \item[1)] for each path $i\to k\to j$ of length two, add a new arrow
	 $i\to j$;
 \item[2)] reverse all arrows with source or target $k$;
 \item[3)] remove the arrows in a maximal set of pairwise
	 disjoint $2$-cycles.
\end{itemize}
It is well known that $\mu_{k}(\mu_{k}(Q))=Q$ for any $1\leq k\leq n$.

A finite sequence of vertices of $Q$,
$\seq{m}=(m_{1},m_{2},\dots,m_{T})$ is called \emph{mutation
sequence}. By putting $Q(t) := \mu_{m_{t}}(Q(t-1))$, $\seq{m}$ induces a
(discrete) time evolution of quivers:
\begin{equation}
 \label{eq:Q-seq}
 \xygraph{!{<0cm,0cm>;<18mm,0cm>:<0cm,20pt>::}
!{(0,0)}*+{Q(0)}="q0"
!{(1,0)}*+{Q(1)}="q1"
!{(2,0)}*+{~\cdots~}="q2"
!{(2.9,0)}*+{Q(t{-}1)}="q3"
!{(4.1,0) }*+{Q(t)}="q4"
!{(5,0)}*+{~\cdots~}="q5"
!{(6,0) }*+{Q(T)}="q6"
"q0":^{\textstyle \mu_{m_{1}}}"q1"
"q1":^{\textstyle \mu_{m_{2}}}"q2"  
"q2":"q3"  
"q3":^(.56){\textstyle \mu_{m_{t}}}"q4"  
"q4":"q5"  
"q5":^{\textstyle \mu_{m_{T}}}"q6"  
}
\end{equation}
$Q(0)$ and $Q(T)$ are called the \emph{initial} and the \emph{final}
quiver, respectively. We will often use the notation
$\mu_{\seq{m}}(Q)=\mu_{m_{T}}(\cdots \mu_{m_{2}}(\mu_{m_{1}}(Q))\cdots)
$.

The quiver mutation corresponds to matrix mutation defined by
Fomin-Zelevinsky \cite{Fomin2002}.  The matrix $B(t)$ corresponding to
$Q(t)$ is given by \cite{Fomin2007}
\begin{equation} 
 \label{eq:matrix-mutation} B(t)_{ij} =
   \begin{cases}
    -B(t{-}1)_{ij} & 
    \text{if $i=k$ or $j=k$} \\
    B(t{-}1)_{ij}+\sgn(B(t{-}1)_{ik}) \max(B(t{-}1)_{ik}B(t{-}1)_{kj},0) 
    & \text{otherwise.} 
   \end{cases}
\end{equation}

Suppose that $Q(0)$ and $Q(T)$ are isomorphic as oriented graphs. An
isomorphism $\varphi : Q(T)\to Q(0)$ regarded as a bijection on the set
of vertices, is called \emph{boundary condition} of the mutation
sequence $\seq{m}$.  We represent $\varphi$ by a permutation of
$\{1,\dots,n\}$, i.e. $\varphi\in S_{n}$. The triple
$\gamma=(Q;\seq{m},\varphi)$ is called a \emph{mutation loop}.

\subsection{Ice quivers and $c$-vectors}
\label{sec:c-vectors}

We will follow the terminology in \cite{Bruestle2013}.  An \emph{ice
quiver} is a pair $(\widetilde{Q},F)$ where $\widetilde{Q}$ is a quiver
and $F\subset \widetilde{Q}_{0}$ is a (possibly empty) subset of
vertices called \emph{frozen vertices} such that there are no arrows
between them.  Two ice quivers $(\widetilde{Q},F)$ and
$(\widetilde{Q}',F')$ are called \emph{frozen isomorphic} if $F=F'$ and
there is an isomorphism of quivers $\phi : \widetilde{Q}\to
\widetilde{Q}'$ such that $\phi|F=\id_{F}$.

For any quiver $Q$, there is a standard way of constructing an ice
quiver  $Q^{\wedge}$ called \emph{framed quiver}.  $Q^{\wedge}$ is an
ice quiver obtained from $Q$ by adding, for each vertex $i$, a new
frozen vertex $i'$ and a new arrow $i\to i'$:
\begin{equation}
 F=\{i'|i\in Q_{0}\},\qquad  (Q^{\wedge})_{0}=Q_{0}\sqcup F,\qquad 
 (Q^{\wedge})_{1}=Q_{1}\sqcup \{i\to i'|i\in Q_{0}\}.
\end{equation}

Let $\seq{m}=(m_{1},m_{2},\dots,m_{T})$ be a mutation sequence for
$Q$. By putting
\begin{equation}
 \widetilde{Q}(0)=Q^{\wedge}, \qquad
  \widetilde{Q}(t)=\mu_{m_{t}}(\widetilde{Q}(t-1))\qquad (t=1,2,\dots,T)
\end{equation}
we can construct a sequence of ice quivers
\begin{equation}
 \label{eq:Q-tilde-seq}
 \xygraph{!{<0cm,0cm>;<18mm,0cm>:<0cm,20pt>::}
!{(0,0)}*+{\widetilde{Q}(0)}="q0"
!{(1,0)}*+{\widetilde{Q}(1)}="q1"
!{(2,0)}*+{~\cdots~}="q2"
!{(2.9,0)}*+{\widetilde{Q}(t{-}1)}="q3"
!{(4.1,0) }*+{\widetilde{Q}(t)}="q4"
!{(5,0)}*+{~\cdots~}="q5"
!{(6,0) }*+{\widetilde{Q}(T)}="q6"
"q0":^{\textstyle \mu_{m_{1}}}"q1"
"q1":^{\textstyle \mu_{m_{2}}}"q2"  
"q2":"q3"  
"q3":^{\textstyle \mu_{m_{t}}}"q4"  
"q4":"q5"  
"q5":^{\textstyle \mu_{m_{T}}}"q6"  
}.
\end{equation}
Note that we never mutate at frozen vertices $F=\{1',\dots,n'\}$.  The
quiver $\widetilde{Q}(t)$ will be called the \emph{ice quiver
corresponding to} $Q(t)$.  Let $\widetilde{B}(t)$ be the antisymmetric
matrix corresponding to $\widetilde{Q}(t)$.  The $c$-vectors are defined
by counting the number of arrows to/from frozen vertices:
\begin{definition}
 \label{def:c-vector} A \emph{$c$-vector of vertex $v$ in $Q(t)$} is a
 vector in $\Z^{n}$ defined by
 \begin{equation}
  c_{v}(t):=\bigl(\widetilde{B}(t)_{vi'}\bigr)_{i=1}^{n}.
 \end{equation}
\end{definition}
If the vertices of $\widetilde{Q}(t)$ are ordered as $(1,\dots,n,
1',\dots,n')$, the antisymmetric matrix $\widetilde{B}(t)$ has the
block form
\begin{equation}
 \label{eq:B-tilde-block}
 \def\h{\rule[-2.2ex]{0pt}{5.8ex}}
  \widetilde{B}(t) = 
  \begin{array}{|c|c|}
   \hline
    \h B(t) & ~C(t)~ \\
   \hline
    \h {-}C(t)^{\top} & 0 \\ \hline
  \end{array}\,,\qquad  C(t)=\begin{array}{|@{\qquad}c@{\qquad}|}
  \hline c_{1}(t) \\
  \hline c_{2}(t) \\
  \hline \cdots  \\
  \hline c_{n}(t) \\
  \hline 
      \end{array}\;,
\end{equation}
where $X^{\top}$ denotes the transpose of $X$.  The $n\times n$ block
$C(t)$ is called \emph{$c$-matrix}, which consists of row of
$c$-vectors.  By construction, $c_{i}(0)=e_{i}$, where $e_{i}$ is the
standard unit vector in $\Z^{n}$.

\begin{theorem}[Sign coherence]
 Each $c$-vector is nonzero and lies in $\N^{n}$ or $(-\N)^{n}$.
\end{theorem}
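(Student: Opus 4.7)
The plan is to prove sign coherence via categorification by quivers with potentials, following Derksen--Weyman--Zelevinsky. The essential idea is to realize each $c$-vector, up to an overall sign, as the dimension vector of an indecomposable decorated representation of an associated Jacobian algebra; since dimension vectors of nonzero representations automatically lie in $\N^{n}\setminus\{0\}$, this identification immediately forces every $c$-vector to be nonzero and to have entries of a single sign.

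First, I would equip the initial quiver $Q$ with a generic nondegenerate potential $W$, which ensures that the DWZ mutation of decorated representations of the Jacobian algebra $J(Q,W)$ is well defined at every vertex and along every mutation sequence $\seq{m}$. Under this mutation, both the quiver with potential and the decorated representation transform simultaneously, producing a sequence $(Q(t),W(t))$ together with decorated representations over each term. Second, for each vertex $v$ and each time $t$, I would construct a distinguished decorated representation $\mathcal{M}_{v}(t)$ by starting at $t=0$ from the negative simple decoration supported at $v$ and applying the DWZ mutations $\mu_{m_{1}},\ldots,\mu_{m_{t}}$ in order.

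Third, I would prove by induction on $t$ that $c_{v}(t)=\varepsilon_{v}(t)\,\dim\mathcal{M}_{v}(t)$ for some sign $\varepsilon_{v}(t)\in\{+1,-1\}$ that records whether $\mathcal{M}_{v}(t)$ is currently a genuine positive representation or a shifted simple decoration. The base case $t=0$ is immediate from $c_{i}(0)=e_{i}$ together with the initial choice of decorations. For the inductive step, one compares the Fomin--Zelevinsky matrix mutation formula \eqref{eq:matrix-mutation} applied to the $c$-vector block of $\widetilde{B}(t)$ with the DWZ mutation rule for dimension vectors of decorated representations. Nonnegativity of the dimension vector and the inductive sign coherence allow one to replace the $\max$ in \eqref{eq:matrix-mutation} by a linear expression, so that the two recursions agree term by term.

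The main obstacle I anticipate is precisely this inductive compatibility in the delicate case when the mutated vertex $k=m_{t}$ coincides with $v$. Here the DWZ mutation swaps the positive and negative parts of $\mathcal{M}_{v}(t)$ and multiplies the corresponding dimension vector by $-1$, and one must verify that this matches the sign flip in $c_{v}(t)$ coming from the first branch of \eqref{eq:matrix-mutation}. Establishing this compatibility, as well as controlling the ``exchange triangles'' that govern the off-vertex entries, requires a careful analysis inside the cluster category of $(Q,W)$ and is the technical heart of the DWZ argument; all other cases reduce to straightforward bookkeeping once this is in place.
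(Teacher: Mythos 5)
The paper does not prove this theorem: it is stated as a quoted result, with the proof attributed to \cite{Derksen2010} and \cite{Plamondon2011} (and, via Donaldson--Thomas theory, to \cite{Nagao2013}). So there is no internal argument to compare yours against line by line; what you have written is an outline of the categorification route taken in those references, which is indeed the standard (and essentially the only known) way to prove the statement.

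As a self-contained proof, however, your proposal has a genuine gap, and you name it yourself. The inductive identification $c_{v}(t)=\varepsilon_{v}(t)\,\dim\mathcal{M}_{v}(t)$, together with the fact that DWZ mutation preserves indecomposability --- so that $\mathcal{M}_{v}(t)$ is at every stage either a genuine module, with dimension vector in $\N^{n}\setminus\{0\}$, or a negative simple, with dimension vector $-e_{j}$, and never a ``mixed'' decorated representation --- is not a technical detail to be deferred: it is the entire content of the theorem. Writing that it ``requires a careful analysis inside the cluster category'' leaves the statement unproved. Two further points deserve care if you do carry this out. First, in \cite{Derksen2010} sign coherence of $c$-vectors is not in fact obtained by the dimension-vector identification you describe; it is deduced from the statement that every $F$-polynomial has constant term $1$, which \cite{Fomin2007} show to be equivalent to sign coherence. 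The identification of $c$-vectors with signed dimension vectors of indecomposable (rigid) modules is a later refinement due to Nagao, Plamondon and N\'ajera Ch\'avez, so you are conflating two related but distinct arguments. Second, your remark about linearizing the $\max$ in \eqref{eq:matrix-mutation} is sound but only needs sign coherence of the single vector $c_{m_{t}}(t-1)$ at the mutating vertex, which the induction supplies; the real weight is carried by the case $v=m_{t}$ and by the exchange triangles controlling the off-vertex entries, exactly where your argument stops.
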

This is conjectured in \cite{Fomin2007} and was proved in
\cite{Derksen2010}, \cite{Plamondon2011}. Nagao \cite{Nagao2013} gave an
alternative proof by using Donaldson-Thomas theory.

\subsection{Green and red mutations}
\label{sec:green sequences}

Following Keller \cite{Keller2011}, we call the vertex $v$ of $Q(t)$ is
\emph{green} (resp. \emph{red}) if $c_{v}(t)\in \N^{n}$
(resp. $-c_{v}(t)\in \N^{n}$). By definition, every vertex of the
initial quiver $Q(0)$ is green. The mutation $ \mu_{m_{t}}:Q(t-1)\to
Q(t)$ is \emph{green} (resp. \emph{red}) if the mutating vertex $m_{t}$
is green (resp. red) on $Q(t-1)$, i.e. on the quiver before
mutation,\footnote{The sign of the mutating vertex changes after the
mutation. If the vertex $m_{t}$ is green on $Q(t-1)$, then it is red on
$Q(t)$.}  and the \emph{sign} $\varepsilon_{t}$ of the mutation
$\mu_{m_{t}}$ is defined as
\begin{equation}
 \label{eq:eps-def}
 \varepsilon_{t} = 
  \begin{cases}
   +1 & \text{if $\mu_{m_{t}}$ is green}, \\
   -1 & \text{if $\mu_{m_{t}}$ is red}. \\
  \end{cases}
\end{equation}

\begin{figure}[tb]
\begin{equation*}
 %%%%%%%%%%%%%%%%%%%%%%%%
 %      Q1  --> Q4
 %     /          \
 %   Q2            Q5
 %     \          /
 %      Q3  --- Q6
 %%%%%%%%%%%%%%%%%%%%%%%%
 \xygraph{!{<0cm,0cm>;<16mm,0cm>:<0cm,23mm>::}
  %%%%%%%%%%%%%%%%
 !{(1,2)}*+{
  \vcenter{\xybox{\xygraph{!{<0cm,0cm>;<9mm,0cm>:<0cm,-9mm>::}
 !{(0,0)}*+<6.0pt>[Fo]{1}="v1"
 !{(1,0)}*+<6.0pt>[Fo]{2}="v2"
 !{(0,1)}*+<3pt>{1'}="w1"
 !{(1,1)}*+<3pt>{2'}="w2"
 "v1":"v2" 
 "v1":"w1" 
 "v2":"w2"  
 }}}}="Q1"
 %%%%%%%%%%%%%%%%
 !{(-0.3,1)}*+{
  \vcenter{\xybox{\xygraph{!{<0cm,0cm>;<9mm,0cm>:<0cm,-9mm>::}
 !{(0,0)}*+<3pt>[F]{1}="v1"
 !{(1,0)}*+<6.0pt>[Fo]{2}="v2"
 !{(0,1)}*+<3pt>{1'}="w1"
 !{(1,1)}*+<3pt>{2'}="w2"
 "v2":"v1" 
 "w1":"v1" 
 "v2":"w2"  
 }}}}="Q2"
 %%%%%%%%%%%%%%%%
 !{(1,0)}*+{
  \vcenter{\xybox{\xygraph{!{<0cm,0cm>;<9mm,0cm>:<0cm,-9mm>::}
 !{(0,0)}*+<3pt>[F]{1}="v1"
 !{(1,0)}*+<3pt>[F]{2}="v2"
 !{(0,1)}*+<3pt>{1'}="w1"
 !{(1,1)}*+<3pt>{2'}="w2"
 "v1":"v2" 
 "w1":"v1" 
 "w2":"v2"  
 }}}}="Q3"
 %%%%%%%%%%%%%%%%
 !{(3,2)}*+{
  \vcenter{\xybox{\xygraph{!{<0cm,0cm>;<9mm,0cm>:<0cm,-9mm>::}
 !{(0,0)}*+<6.0pt>[Fo]{1}="v1"
 !{(1,0)}*+<3pt>[F]{2}="v2"
 !{(0,1)}*+<3pt>{1'}="w1"
 !{(1,1)}*+<3pt>{2'}="w2"
 "v2":"v1" 
 "v1":"w1" 
 "v1":"w2" 
 "w2":"v2"  
 }}}}="Q4"
 %%%%%%%%%%%%%%%%
 !{(4.3,1)}*+{
  \vcenter{\xybox{\xygraph{!{<0cm,0cm>;<9mm,0cm>:<0cm,-9mm>::}
 !{(0,0)}*+<3pt>[F]{1}="v1"
 !{(1,0)}*+<6.0pt>[Fo]{2}="v2"
 !{(0,1)}*+<3pt>{1'}="w1"
 !{(1,1)}*+<3pt>{2'}="w2"
 "v1":"v2" 
 "v2":"w1" 
 "w2":"v1"  
 "w1":"v1"  
 }}}}="Q5"
 %%%%%%%%%%%%%%%%
 !{(3,0)}*+{
  \vcenter{\xybox{\xygraph{!{<0cm,0cm>;<9mm,0cm>:<0cm,-9mm>::}
 !{(0,0)}*+<3pt>[F]{1}="v1"
 !{(1,0)}*+<3pt>[F]{2}="v2"
 !{(0,1)}*+<3pt>{1'}="w1"
 !{(1,1)}*+<3pt>{2'}="w2"
 "v2":"v1" 
 "w2":"v1"  
 "w1":"v2"  
 }}}}="Q6"
 %%%%%%%%%%%%%%%%
 "Q1":@{=>}_{\textstyle \mu_{1}}"Q2"
 "Q2":@{=>}_{\textstyle \mu_{2}}"Q3"
 "Q1":@{=>}^{\textstyle \mu_{2}}"Q4"
 "Q4":@{=>}^{\textstyle \mu_{1}}"Q5"
 "Q5":@{=>}^{\textstyle \mu_{2}}"Q6"
 "Q3":@{<-}^{\textstyle (12)}_{\text{frozen isom.}}"Q6"
}
\end{equation*}
\caption{Pentagon and the $A_{2}$ quiver. The green and red vertices are
 marked with circles and boxes, respectively. Both 
 $\seq{m}=(1,2)$ and $\seq{m}'=(2,1,2)$ are maximal green sequences.}
 \label{fig:pentagon}
\end{figure}

A mutation sequence $\seq{m}=(m_{1},m_{2},\dots,m_{T})$ 
is called \emph{green sequence} if $m_{t}$
is green for all $t$, and is \emph{maximal green sequence} if all of the
vertex of the final quiver $Q(T)$ are red. 
In Figure \ref{fig:pentagon}, the two maximal green
sequences $(12)$ and $(212)$  are shown for $A_{2}$ quiver.

By inspecting the matrix mutation rules \eqref{eq:matrix-mutation} for
the ice quivers $\widetilde{Q}(t)$, it is easy to see how the
$c$-vector changes via mutations:
\begin{lemma}
 \label{prop:c-vec-change} 
 Under the mutation $ \mu_{v}:Q(t)\to
 Q(t+1)$, $c$-vector changes as
 \begin{equation}
  \label{eq:c-vec-change}
  c_{i}(t+1)=
   \begin{cases}
    -c_{i}(t)
    &  \text{if~} i=v 
    \\
    c_{i}(t)+ Q(t)_{i,v} \cdot c_{v}(t)  
    &
    \text{if~$i\neq v$ and $\mu_{v}$ is green} 
    \\
    c_{i}(t)+Q(t)_{v,i} \cdot  c_{v}(t)  
    & \text{if~$i\neq v$ and $\mu_{v}$ is red} 
   \end{cases}
 \end{equation}
\end{lemma}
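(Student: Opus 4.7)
The plan is to apply the matrix mutation rule \eqref{eq:matrix-mutation} directly to the extended matrix $\widetilde{B}(t)$ at vertex $v$ and then read off the columns indexed by the frozen vertices, since by Definition \ref{def:c-vector} the $c$-vector $c_i(t)$ is precisely the $i$-th row of $\widetilde{B}(t)$ restricted to the frozen-vertex columns $\{1',\dots,n'\}$.

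The case $i = v$ is handled by the first clause of \eqref{eq:matrix-mutation}: $\widetilde{B}(t+1)_{v,j'} = -\widetilde{B}(t)_{v,j'}$ for every $j$, so $c_v(t+1) = -c_v(t)$ with no further input. For $i \neq v$ the second clause reads
\[
 \widetilde{B}(t+1)_{i,j'} = \widetilde{B}(t)_{i,j'} + \sgn\bigl(\widetilde{B}(t)_{i,v}\bigr)\,\max\bigl(\widetilde{B}(t)_{i,v}\,\widetilde{B}(t)_{v,j'},\,0\bigr),
\]
and the task reduces to rewriting the correction term as a scalar multiple of $\widetilde{B}(t)_{v,j'}$ that is independent of~$j$, since only then does the update on the row $\widetilde{B}(t)_{i,\cdot}$ descend to a linear update on the $c$-vector $c_i(t)$.

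Here the sign-coherence theorem is what powers the argument: the entries $\widetilde{B}(t)_{v,j'}$ making up $c_v(t)$ share a common sign, nonnegative if $\mu_v$ is green and nonpositive if $\mu_v$ is red. In the green case I would factor $\widetilde{B}(t)_{v,j'} \geq 0$ out of the $\max$ to get $\max(\widetilde{B}(t)_{i,v},0)\,\widetilde{B}(t)_{v,j'}$, then collapse the $\sgn$ prefactor via the identity $\sgn(x)\max(x,0) = \max(x,0)$ to recover $\max(B(t)_{i,v},0) = Q(t)_{i,v}$; this gives $c_i(t+1) = c_i(t) + Q(t)_{i,v}\,c_v(t)$. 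In the red case $\widetilde{B}(t)_{v,j'} \leq 0$ and the analogous manipulation uses the identity $\sgn(x)\min(x,0) = \max(-x,0)$ to produce $\max(-\widetilde{B}(t)_{i,v},0)\,\widetilde{B}(t)_{v,j'}$, which by the skew-symmetry $\widetilde{B}(t)_{i,v} = -B(t)_{v,i}$ equals $Q(t)_{v,i}\,\widetilde{B}(t)_{v,j'}$, giving the stated formula $c_i(t+1) = c_i(t) + Q(t)_{v,i}\,c_v(t)$.

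The argument is essentially a bookkeeping exercise; the only delicate point is tracking the signs inside the $\max$ carefully, and the main conceptual move is observing that sign coherence eliminates all $j$-dependence from the quadratic correction, converting the nonlinear matrix-mutation formula into the linear $c$-vector update asserted by the lemma. No other ingredients are required, so I anticipate no serious obstacle.
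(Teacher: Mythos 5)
Your argument is correct and is exactly the computation the paper has in mind: the paper offers no written proof, merely asserting that the lemma follows ``by inspecting the matrix mutation rules \eqref{eq:matrix-mutation} for the ice quivers,'' and your proposal carries out that inspection, with the key observation (sign coherence of $c_{v}(t)$ letting you pull $\widetilde{B}(t)_{v,j'}$ out of the $\max$ uniformly in $j$) handled correctly in both the green and red cases. The sign identities $\sgn(x)\max(x,0)=\max(x,0)$ and $\sgn(x)\min(x,0)=\max(-x,0)$ and the use of skew-symmetry to convert $\max(-B(t)_{i,v},0)$ into $Q(t)_{v,i}$ all check out, so there is nothing to add.
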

\begin{cor}
 \label{prop:C(t)-as-basis} 
 $\det C(t)=(-1)^t$. In particular, $C(t)\in GL_{n}(\Z)$ and, $c$-vectors
 $\{c_{i}(t)\}_{i=1}^{n}$ constitutes a $\Z$-basis of $\Z^{n}$ for each $t$.
\end{cor}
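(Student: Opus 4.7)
The plan is to prove both assertions simultaneously by induction on $t$, using Lemma \ref{prop:c-vec-change} to track how $C(t)$ transforms under a single mutation.

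For the base case $t=0$, by construction $c_i(0)=e_i$, so $C(0)=I_n$ and $\det C(0)=1=(-1)^0$.

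For the inductive step, I would read off Lemma \ref{prop:c-vec-change} as a description of $C(t+1)$ in terms of elementary row operations on $C(t)$. Regardless of whether $\mu_v$ is green or red, the new row $v$ of $C(t+1)$ is the negative of the old row $v$, and for each $i \neq v$ the new row $i$ is obtained from the old row $i$ by adding an integer multiple (namely $Q(t)_{i,v}$ or $Q(t)_{v,i}$) of the \emph{old} row $v$. In particular, $C(t+1)$ is obtained from $C(t)$ by first performing a sequence of type-III row operations (adding integer multiples of one row to other rows), which leaves the determinant unchanged, and then multiplying row $v$ by $-1$, which multiplies the determinant by $-1$. Hence $\det C(t+1) = -\det C(t) = (-1)^{t+1}$, completing the induction.

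Since $\det C(t) = \pm 1$ and $C(t)$ has integer entries, $C(t) \in GL_n(\Z)$, so its rows $\{c_i(t)\}_{i=1}^n$ form a $\Z$-basis of $\Z^n$.

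There is essentially no obstacle here; the only point requiring a moment of care is to notice that in the formula $c_i(t+1) = c_i(t) + (\cdots) c_v(t)$ the multiple is of the \emph{old} $c_v(t)$, so the row operations cleanly decompose as ``add integer multiples of row $v$ to the other rows (determinant unchanged), then negate row $v$ (determinant flipped)''; equivalently one may add multiples of the new row $v$ with opposite-sign coefficients, which gives the same net effect on the determinant.
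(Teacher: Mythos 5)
Your proof is correct and is exactly the argument the paper intends: the paper states this as an immediate corollary of Lemma \ref{prop:c-vec-change} without writing out details, and the implicit reasoning is precisely your induction, with each mutation acting on the rows of $C(t)$ by determinant-preserving row additions followed by a single sign flip of row $v$. Nothing to add.
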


\subsection{Noncommutative algebra $\Ah_{Q}$}

We introduce a noncommutative associative algebra in which quantum
dilogarithms and combinatorial Donaldson-Thomas invariants take their
values.

Let $Q$ be a quiver with vertices $\{1,2,\dots,n\}$. We define a skew
symmetric bilinear form $\vev{~,~}:\Z^{n}\times \Z^{n}\to \Z$ by
\begin{equation}
 \label{def:vev} 
  \vev{e_{i},e_{j}}=B_{ij}=-B_{ji}=Q_{ij}-Q_{ji},
\end{equation}
where $e_{1},\dots,e_{n}$ are the standard basis vectors in $\Z^{n}$.

Let $R$ be a commutative ring\footnote{The coefficient ring $R$ should
be chosen in such a way that the factors $q^{\pm \frac{1}{2}kk^{\vee}}$
of mutation weight \eqref{eq:W-def} belong to $R$. The exponent of $q$
can have nontrivial denominator through the process of expressing
$k^{\vee}$-variables in terms of $k$-variables.  As discussed in
\cite{KT2014}, there is a positive integer $\Delta$, depending only on
the mutation loop, such that $\frac{1}{2}kk^{\vee}\in
\frac{1}{\Delta}\Z$. Then we can choose $R=\Q(q^{1/\Delta})$.}
containing $\Q(q^{1/2})$.  Let
$\A_{Q}$ be a noncommutative associative algebra over $R$ presented as
\begin{equation}
 \label{eq:quantum-affine-space}
 \A_{Q}=R
  \langle\,
  y^{\alpha},~\alpha\in \N^{n} ~|~ 
  y^{\alpha}y^{\beta}
  = q^{\frac{1}{2}\vev{\alpha,\beta}}y^{\alpha+\beta}\,
  \rangle.
\end{equation}
Its completion with respect to the $\N^{n}$-grading is denoted by
$\widehat{\A}_{Q}$.  We may regard $\A_{Q}$ as the ring of
noncommutative polynomials in $y_{i}:=y^{e_{i}}$ $(i=1,\dots,n)$. Later
we will frequently use the following relations
($\alpha=(\alpha_{1},\dots,\alpha_{n})\in \Z^{n}$):
\begin{equation}	 
 \label{eq:y-monom}
\begin{split}
 & y_{1}^{\alpha_{1}}
  y_{2}^{\alpha_{2}}
  \dots 
  y_{n}^{\alpha_{n}}
  =q^{+\frac{1}{2}\sum_{i<j} B_{ij}\alpha_{i}\alpha_{j}}
  y^{\alpha},
 \\
 & y^{\alpha}=q^{-\frac{1}{2}\sum_{i<j} B_{ij}\alpha_{i}\alpha_{j}}
  y_{1}^{\alpha_{1}}
  y_{2}^{\alpha_{2}}
  \dots 
  y_{n}^{\alpha_{n}}.
\end{split}
\end{equation}

Later we will use a $\Q$-algebra anti-automorphism 
$\overline{\phantom{x}}:\A_{Q}\to \A_{Q}$ defined by
\begin{equation}
 \label{eq:anti-autom} y^{\alpha}\mapsto y^{\alpha}\quad (\alpha\in
   \N^{n}),\qquad q\mapsto q^{-1}
\end{equation}
Note that this is an involutive anti-automorphism of $\Q$-algebra, not
of $R$-algebra.

\section{Partition $q$-series}
\label{sec:Zdef}

In this section, we recapitulate the notion of partition $q$-series
introduced in \cite{KT2014}.  As mentioned in Introduction, we refine
and extend the definition, so that we can state the relationship between
our partition $q$-series and the products of quantum dilogarithm
(combinatorial DT-invariants) in full generality.
\begin{enumerate}[(i)]
 \itemsep=1ex
 \item We introduce noncommutative variables $y_{1},\dots,y_{n}$ to keep
      track of the $\N^{n}$-grading. This is in conformity to the custom
      of quantum dilogarithms and DT-invariants. They are naturally
      associated with the $c$-vectors as well as $s$-variables. This
      graded version of partition $q$-series now take their values in
      $\Ah_{Q}$ --- the (completed) ring of noncommutative polynomials
      in $y_{1},\dots,y_{n}$, rather than $\N[[q^{1/\Delta}]]$.

 \item We make a distinction between green and red mutations, and we add
      a new rule for red mutations.\footnote{Except the grading, all the
      results in \cite{KT2014} remains the same in our new setting; all
      the mutation sequence considered there are green sequences.}
      Although this refinement requires additional data --- $c$-vectors,
      or equivalently ice quivers, we obtain perfect match (Theorem
      \ref{thm:main}) between the partition $q$-series and the
      combinatorial Donaldson-Thomas invariant wherever the latter
      invariant are defined.

\end{enumerate}  
As a bonus of these refinements, we can handle arbitrary non-degenerate
mutation sequences\footnote{In \cite{KT2014}, the partition $q$-series
were well-defined only for ``positive'' mutation loops.}. Moreover, the
refined version acquire the invariance under the insertion or deletion
of backtracking in mutation sequence (Theorem
\ref{prop:Z-backtrack-inv}).

In the case of green mutation sequences, this new definition coincides
with the original one \cite{KT2014} just by forgetting $\N^{n}$
gradings.

\subsection{The partition $q$-series}

Let $Q$ be a quiver with vertices $\{1,2,\dots,n\}$.  We consider the
mutation sequence $\seq{m}=(m_{1},m_{2},\dots,m_{T})$ of $Q$:
\begin{equation}
 \label{eq:Q-seq-2}
 \xygraph{!{<0cm,0cm>;<18mm,0cm>:<0cm,20pt>::}
!{(0,0)}*+{Q(0)}="q0"
!{(1,0)}*+{Q(1)}="q1"
!{(2,0)}*+{~\cdots~}="q2"
!{(2.9,0)}*+{Q(t{-}1)}="q3"
!{(4.1,0) }*+{Q(t)}="q4"
!{(5,0)}*+{~\cdots~}="q5"
!{(6,0) }*+{Q(T)}="q6"
"q0":^{\textstyle \mu_{m_{1}}}"q1"
"q1":^{\textstyle \mu_{m_{2}}}"q2"  
"q2":"q3"  
"q3":^{\textstyle \mu_{m_{t}}}"q4"  
"q4":"q5"  
"q5":^{\textstyle \mu_{m_{T}}}"q6"  
}.
\end{equation}

The partition $q$-series is defined as follows \cite{KT2014}.
We first introduce a family of \emph{$s$-variables} $\{s_{i}\}$,
\emph{$k$-variables} $\{k_{t}\}$, and 
\emph{$k^{\vee}$-variables} $\{k^{\vee}_{t}\}$ by the following rule:
\begin{itemize}
 \item[(i)] An ``initial'' $s$-variable $s_{v}$ is attached to each
	    vertex $v$ of the initial quiver $Q$.

 \item[(ii)] Every time we mutate at vertex $v$, we add a ``new''
	    $s$-variable associated with $v$. We often write
	     $s_{v}$, $s_{v}'$, $s_{v}''$, $\dots$ to distinguish
	    $s$-variables attached to the same vertex.

 \item[(iii)] We associate $k_{t}$ and $k^{\vee}_{t}$ with each mutation
	    at $m_{t}$.
	      
 \item[(iv)] If two vertices are identified by a boundary condition, then
	    the corresponding $s$-variables are also identified.
\end{itemize}

The $s$-, $k$-, and $k^{\vee}$-variables are not considered
independent; we impose a linear relation for each mutation step.
Suppose that the quiver $Q(t-1)$ equipped with $s$-variables $\{s_{i}\}$
is mutated at vertex $v=m_{t}$ to give $Q(t)$. Then $k$- and
$s$-variables are required to satisfy
\begin{equation}
 \label{eq:k-s-rel}
 k_{t}=\begin{cases}
	\displaystyle
	s_{v}+s_{v}'-\sum_{a\to v} s_{a} & \text{if $\mu_{v}$ is
	green} ~(\varepsilon_{t}=1)\\
	\displaystyle \rule{0pt}{15pt}
	\sum_{v\to b} s_{b} - (s_{v}+s_{v}') & \text{if $\mu_{v}$ is
	red} ~(\varepsilon_{t}=-1)\\
       \end{cases}
\end{equation}
Here, $s'_{v}$ is the ``new'' $s$-variable attached to mutated vertex
$v$, and the sum is over all the arrows of $Q(t-1)$.
Similarly, $k^{\vee}$- and $s$-variables are related as
\begin{equation}
 \label{eq:kv-s-rel}
 k^{\vee}_{t}
  =\begin{cases}
    \displaystyle
	       s_{v}+s_{v}'-\sum_{v\to b} s_{b} & \text{if $\mu_{v}$ is
	green} ~(\varepsilon_{t}=1)\\
	\displaystyle \rule{0pt}{15pt}
    \sum_{a\to v} s_{a} - (s_{v}+s_{v}')
 & \text{if $\mu_{v}$ is
	red} ~(\varepsilon_{t}=-1)\\
   \end{cases}
\end{equation}
Therefore, 
\begin{equation}
 \label{eq:kv-k}
  k^{\vee}_{t}-k_{t}= \sum_{a\to v} s_{a} - \sum_{v\to b} s_{b}
\end{equation}
holds at each mutation.

The \emph{weight of the mutation} $\mu_{m_{t}}:Q(t-1)\to Q(t)$ at
$v=m_{t}$ is defined as
\begin{equation}
 \label{eq:W-def}
  W(m_{t}):=\frac{q^{\frac{\varepsilon_{t}}{2}k_{t}k^{\vee}_{t}}}
  {(q^{\varepsilon_{t}})_{k_{t}}} 
  =
  \begin{cases}
 \displaystyle
 \frac{q^{\frac{1}{2}k_{t}k^{\vee}_{t}}}
  {(q)_{k_{t}}} 
 & \text{if  $\mu_{v}$ is green},
 \\
 \displaystyle \rule{0pt}{22pt}
 \frac{q^{-\frac{1}{2}k_{t}k^{\vee}_{t}}}
  {(q^{-1})_{k_{t}}} 
 & \text{if  $\mu_{v}$ is red}.
  \end{cases}
\end{equation}
Here, $\varepsilon_{t}$ is the sign of $\mu_{m_{t}}$ and 
\begin{equation}
  (q)_{k} := \prod_{i=1}^{k}(1-q^{i})
\end{equation}
is the $q$-Pochhammer symbol. 

The \emph{$\N^{n}$-grading} of the mutation $\mu_{m_{t}}$ is
$k_{t} \alpha_{t}$ by definition, where
\begin{equation}
  \label{eq:alpha-def}
 \alpha_{t}:=\varepsilon_{t}c_{m_{t}}(t{-}1)\quad \in \N^{n}\setminus\{0\}
\end{equation}
is the (sign-corrected) $c$-vector of the vertex on which mutation is applied.

It is occasionally useful to regard the relation \eqref{eq:k-s-rel} as
the time evolution of $s$-variables with the control parameters
$\{k_{t}\}$.  Let $s_{i}(t)$ denote the value of the $s$-variable
associated with vertex $i$ at $Q(t)$. Then \eqref{eq:k-s-rel} can be
written as
\begin{equation}
 \label{eq:s-evolution}
 s_{i}(t)=
  \begin{cases}
   \rule{0pt}{14pt} s_{i}(t{-}1) & \text{if $i\neq v$},
   \\
   \displaystyle \rule{0pt}{19pt}
   k_{t}-s_{v}(t{-}1)+\sum_{a}Q(t)_{a,v} s_{a}(t{-}1)
   & \text{if $i=v$ and $\mu_{v}$ is green},
   \\
   \displaystyle \rule{0pt}{13pt}
   -k_{t}-s_{v}(t{-}1)+\sum_{b}Q(t)_{v,b} s_{b}(t{-}1)
   & \text{if $i=v$ and $\mu_{v}$ is red}.
  \end{cases}
\end{equation}
With this notation, \eqref{eq:kv-k} reads as
\begin{equation}
 \label{eq:k-kv-rel}
   k_{t}^{\vee}= k_{t}-\sum_{i} B(t{-}1)_{v,i} \,s_{i}(t{-}1)
   = k_{t}+\sum_{i} B(t{-}1)_{i,v}\, s_{i}(t{-}1).
\end{equation}
One can usually solve the linear relations \eqref{eq:k-s-rel} for
$s$-variables in terms of $k$-variables. If this is the case, the
mutation loop is called \emph{non-degenerate} (see \cite{KT2014}). Then
using \eqref{eq:kv-s-rel} and \eqref{eq:W-def}, we can express all the
mutation weights $\{W(m_{t})\}$ as functions of
$\seq{k}=(k_{1},\dots,k_{T})$.

Hereafter we assume that the mutation loop $\gamma$ is
non-degenerate. Then the ($\N^{n}$-graded) \emph{partition $q$-series}
associated with $\gamma$ is defined as
\begin{equation}
  \label{eq:zz-def}
  \ZZ(\gamma):=\sum_{\seq{k}\in \N^{T}} \biggl(
  \prod_{t=1}^{T}W(m_{t})\biggr)
   y^{\sum_{t=1}^{T}k_{t}\alpha_{t}}.
\end{equation}
\begin{rem}
 For a fixed $\beta\in \N^{n}$, there is only a finite number of
 $\seq{k}=(k_{1},\dots,k_{T})\in \N^{T}$ satisfying
 $\beta=\sum_{t=1}^{T}k_{t} \alpha_{t}$. Therefore $\ZZ(\gamma)$ is
 well-defined as an element of $\Ah_{Q}$.  In our previous paper
 \cite{KT2014}, we had no $\N^{n}$-grading and thus needed the
 additional ``positive'' assumption on the quadratic form in the
 mutation weight to guarantee this finiteness.
\end{rem}

\section{Backtracking invariance of the partition $q$-series}
\label{sec:backtracking}

Two successive mutations at the same vertex is called
\emph{backtracking}. It is well known that under backtracking a quiver
comes back to its original form: $\mu_{v}(\mu_{v}(Q))=Q$. All the
$c$-vectors recover their original values since
$\mu_{v}(\mu_{v}(\widetilde{Q}))=\widetilde{Q}$.

In this section, we prove that the partition series are invariant under
insertion or deletion of backtracking. The original version
\cite{KT2014} of partition $q$-series lacks this property; this is one
reason why we adopt different rules (e.g. \eqref{eq:s-evolution}) for
different signs (red/green).

\begin{theorem}
 \label{prop:Z-backtrack-inv} The partition $q$-series is invariant
 under insertion or deletion of backtracking:
 \begin{equation}
 \label{eq:Z-backtrack-inv} 
  \ZZ((Q;\seq{m}_{1}v\,v\,\seq{m}_{2},\varphi))=
  \ZZ((Q;\seq{m}_{1}\seq{m}_{2},\varphi)).
 \end{equation}
\end{theorem}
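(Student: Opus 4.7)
The plan is to isolate the two inserted mutations in the partition sum and show that summing over their $k$-variables collapses to the identity. Set $t=|\seq{m}_{1}|$, let $v$ be the backtracking vertex, and let $\varepsilon\in\{+1,-1\}$ be the sign of $v$ in $Q(t)$; the first insertion then has sign $\varepsilon$ and the second $-\varepsilon$, since mutating at a vertex flips its sign. Because $\mu_{v}^{2}=\id$ on framed quivers, $c_{v}(t{+}2)=c_{v}(t)$, so the sign-corrected $c$-vector $\alpha:=\varepsilon c_{v}(t)\in\N^{n}$ is common to both insertions and the $y$-grading from them is $y^{(\tilde{k}_{1}+\tilde{k}_{2})\alpha}$, where $\tilde{k}_{1},\tilde{k}_{2}$ are the associated $k$-variables.

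First I would apply the evolution rule \eqref{eq:s-evolution} twice (with opposite signs) and use the arrow-reversal $Q(t{+}1)_{a,v}=Q(t)_{v,a}$ to make the arrow-sum contributions cancel, obtaining
\begin{equation*}
 s_{i}(t{+}2)=s_{i}(t)\ (i\neq v),\qquad s_{v}(t{+}2)=s_{v}(t)-\varepsilon K,\quad K:=\tilde{k}_{1}+\tilde{k}_{2}.
\end{equation*}
Thus every subsequent mutation weight depends on $(\tilde{k}_{1},\tilde{k}_{2})$ only through $K$. Next, \eqref{eq:k-kv-rel} combined with $B(t{+}1)_{i,v}=-B(t)_{i,v}$ yields $k_{1}^{\vee}=\tilde{k}_{1}+\Sigma,\ k_{2}^{\vee}=\tilde{k}_{2}-\Sigma$, where $\Sigma:=\sum_{i}B(t)_{i,v}\,s_{i}(t)$. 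Writing $m=\tilde{k}_{1}$, a short computation gives
\begin{equation*}
 W_{1}W_{2}=q^{\varepsilon K(\Sigma-K)/2}\cdot\frac{q^{\varepsilon mK}}{(q^{\varepsilon})_{m}(q^{-\varepsilon})_{K-m}}.
\end{equation*}

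The heart of the proof is the $q$-identity
\begin{equation*}
 \sum_{m=0}^{K}\frac{q^{\varepsilon mK}}{(q^{\varepsilon})_{m}(q^{-\varepsilon})_{K-m}}=\delta_{K,0}.
\end{equation*}
I would establish this either by identifying the left side as the coefficient of $y^{K}$ in the product of two mutually reciprocal generating series $\sum_{m}q^{\varepsilon m^{2}/2}y^{m}/(q^{\varepsilon})_{m}=(-q^{\varepsilon/2}y;q^{\varepsilon})_{\infty}$ and $\sum_{n}q^{-\varepsilon n^{2}/2}y^{n}/(q^{-\varepsilon})_{n}=(-q^{\varepsilon/2}y;q^{\varepsilon})_{\infty}^{-1}$, or, more elementarily, by applying the identity $(q^{-1})_{j}=(-1)^{j}q^{-j(j+1)/2}(q)_{j}$ and invoking the $q$-binomial theorem at $x=q^{1-K}$, so that the sum becomes a nonzero multiple of $(q^{1-K};q)_{K}$, which vanishes for $K\geq 1$ because of the $(1-q^{0})$ factor.

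Finally I would regroup the sum defining $\ZZ((Q;\seq{m}_{1}vv\seq{m}_{2},\varphi))$ by $K$; the inner $(\tilde{k}_{1},\tilde{k}_{2})$-sum contributes $\delta_{K,0}$, forcing $\tilde{k}_{1}=\tilde{k}_{2}=0$. At that value both insertion weights equal $1$, the $y$-grading contribution is trivial, and the $s$-variables (hence every subsequent weight) coincide with those of $\seq{m}_{1}\seq{m}_{2}$, giving the required equality. The main obstacle is the $q$-identity above and the careful bookkeeping of the $\Sigma$-dependent prefactor, which cancels precisely at $K=0$.
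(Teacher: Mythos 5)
Your proof is correct and follows essentially the same route as the paper: isolate the two backtracking weights, observe that everything depends on $\tilde k_1,\tilde k_2$ only through $K=\tilde k_1+\tilde k_2$, and collapse the inner sum via the inversion identity $\EE(y;q)\EE(y;q^{-1})=1$ (your $q$-identity is Corollary \ref{prop:EEinv-cor} multiplied by $q^{K^{2}/2}$, and your elementary $q$-binomial argument matches the paper's alternative proof of that corollary). The only difference is cosmetic: you treat both sign patterns $(\varepsilon,-\varepsilon)$ uniformly, whereas the paper works out $(+,-)$ explicitly and leaves $(-,+)$ to the reader.
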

\begin{figure}[tbh]
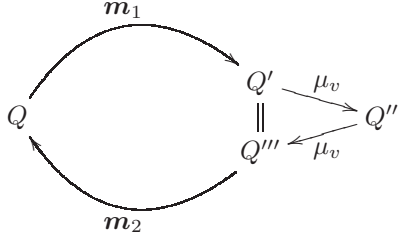

 \begin{equation*}
  \vcenter{ \xygraph{!{<0cm,0cm>;<23pt,0cm>:<0cm,13pt>::} 
   !{(0,0)}*+{Q}="Q0" 
   !{(4,1) }*+{Q'}="Q1" 
   !{(6,0)}*+{Q''}="Q2" 
   !{(4,-1) }*+{Q'''}="Q3" 
  "Q0":@/^1cm/"Q1"^{\textstyle {\seq{m}_{1}}}
  "Q3":@/^1cm/"Q0"^{\textstyle {\seq{m}_{2}}}
  "Q1":"Q2"^{\textstyle \mu_{v}} 
  "Q2":"Q3"^{\textstyle \mu_{v}} 
  "Q1":@{=}"Q3"}}
 \end{equation*}
 \caption{Mutation loop with backtracking.}
 \label{fig:backtracking}
\end{figure}
\begin{figure}[tbh]
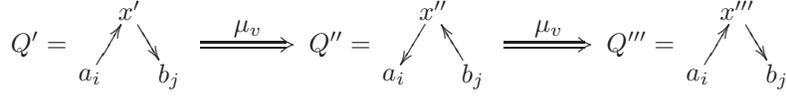

\begin{equation*}
 \xygraph{
!{<0cm,0cm>;<40mm,0cm>:<0cm,18mm>::}
!{(0,0)}*++{Q'=\vcenter{
\xybox{\xygraph{!{<0cm,0cm>;<15pt,0cm>:<0cm,25pt>::}
!{(1,1) }*+<3pt>{x'}="x" 
!{(0,0) }*+<3pt>{a_{i}}="a" 
!{(2,0) }*+<3pt>{b_{j}}="b" 
"a":"x" "x":"b" }}}}="Q0"
!{(1,0)}*++{Q''=\vcenter{\xybox{\xygraph{!{<0cm,0cm>;<15pt,0cm>:<0cm,25pt>::}
!{(1,1) }*+<3pt>{x''}="x" 
!{(0,0) }*+<3pt>{a_{i}}="a" 
!{(2,0) }*+<3pt>{b_{j}}="b" 
"x":"a" "b":"x" }}}}="Q1"
!{(2,0)}*++{Q'''=\vcenter{\xybox{\xygraph{!{<0cm,0cm>;<15pt,0cm>:<0cm,25pt>::}
!{(1,1) }*+<3pt>{x'''}="x" 
!{(0,0) }*+<3pt>{a_{i}}="a" 
!{(2,0) }*+<3pt>{b_{j}}="b" 
"a":"x" "x":"b" }}}}="Q2"
"Q0":@{=>}^{\textstyle \mu_{v}}"Q1" 
"Q1":@{=>}^{\textstyle \mu_{v}}"Q2" 
}\qquad ~
\end{equation*}
  \caption{Backtracking. Only the arrows incident on $v$ are
  shown. }\label{fig:mux^2}
\end{figure}
\begin{proof}
 The mutation loop $(Q;\seq{m}_{1}v\,v\,\seq{m}_{2},\varphi)$ is shown
 in Figure \ref{fig:backtracking}. We concentrate on two successive
 mutations constituting the backtracking:
 \begin{equation}
  \cdots \to Q' 
   \stackrel{\displaystyle\mu_{v}}{\longrightarrow} Q''
   \stackrel{\displaystyle\mu_{v}}{\longrightarrow} Q''' \to \cdots
 \end{equation}
 The proof is given only for the case when the signs of these two
 mutations are $(+,-) =$ (green, red); the other case $(-,+)$ is left to
 the reader. By assumption, the $c$-vector $\alpha$ of the vertex $v$
 changes as
 \begin{equation*}
  \alpha \mapsto -\alpha \mapsto  \alpha 
 \end{equation*} 
 for some $\alpha\in\N^{n}\setminus \{0\}$.

 Let $x'$, $x''$, $x'''$ be the $s$-variables associated with the vertex
 $v$ of $Q'$, $Q''$, $Q'''$ respectively, and $k_{1}$, $k_{2}$ be the
 $k$-variables corresponding to the two mutations at $v$.  As in Figure
 \ref{fig:mux^2}, we collectively denote by $i\to v$, $v\to j$ the
 arrows of $Q'$ touching $v$, and $a_{i}$, $b_{j}$ be the corresponding
 $s$-variables; some of the vertices $i$, $j$ may be missing, duplicated
 or identified. By \eqref{eq:k-s-rel}, \eqref{eq:kv-s-rel}, these
 $s$-variables are related with $k$- and $k^{\vee}$-variables as
 \begin{align*}
   k_{1}&=x'+x''-\sum a_{i},&\qquad& k^{\vee}_{1}=x'+x''-\sum b_{j}, \\
  k_{2}&=\sum a_{i}-(x''+x'''),&\qquad& k^{\vee}_{2}=\sum b_{j}
  -(x''+x'''). \\
 \end{align*}
 The weight corresponding to the backtracking $(v\,v)$ is given by
  \begin{align*}
    W((v\,v))=& \left( \frac{q^{\frac{1}{2}(x'+x''-\sum a_{i})(x'+x''-\sum
    b_{j})}}{(q)_{x'+x''-\sum a_{i}}} \right) \times \left(
    \frac{q^{-\frac{1}{2}(\sum a_{i} -(x''+x'''))(\sum b_{j}
    -(x''+x'''))}}{(q^{-1})_{\sum a_{i} -(x''+x''')}} \right) 
   \\ 
   & =
    \frac{q^{\frac{1}{2}k_{1}(k_{1}+\sum a_{i}-\sum
    b_{j})}}{(q)_{k_{1}}} \frac{q^{-\frac{1}{2}k_{2}(k_{2}+\sum
    b_{j}-\sum a_{i})}}{(q^{-1})_{k_{2}}} \\ & =
    \frac{q^{\frac{1}{2}(k_{1}^{2}-k_{2}^{2})}}{(q)_{k_{1}}(q^{-1})_{k_{2}}}
    \left( q^{\frac{1}{2}(\sum a_{i}-\sum b_{j})} \right)^{k_{1}+k_{2}}.
 \end{align*}
 By summing over $k_{1},k_{2}$ with appropriate $\N^{n}$-grading (see
 \eqref{eq:zz-def}), we have
 \begin{align*}
  & \sum_{k_{1},k_{2}=0}^{\infty}
    \frac{q^{\frac{1}{2}(k_{1}^{2}-k_{2}^{2})}}{(q)_{k_{1}}(q^{-1})_{k_{2}}}
    \left(q^{\frac{1}{2}(\sum a_{i}-\sum b_{j})}\right)^{k_{1}+k_{2}}
    y^{(k_{1}+k_{2})\alpha} \\ & =\sum_{n=0}^{\infty}
    \underbrace{\biggl( \sum_{\substack{k_{1},k_{2}\geq
    0\\k_{1}+k_{2}=n}} \frac{q^{\frac{1}{2}k_{1}^{2}}}{(q)_{k_{1}}}
    \frac{q^{-\frac{1}{2}k_{2}^{2}}}
    {(q^{-1})_{k_{2}}}\biggr)}_{\displaystyle=\delta_{n,0}}
    \left(q^{\frac{1}{2}(\sum a_{i}-\sum b_{j})}\right)^{n} y^{n\alpha}
    \\ & =1,
 \end{align*}
 where the identity \eqref{eq:EEinv-cor} of Corollary
 \ref{prop:EEinv-cor} is used.  The nontrivial contribution survives
 only when $k_{1}=k_{2}=0$, or equivalently $x'''=x'$ holds: all the
 $s$-variables on $Q'$ and $Q'''$ are vertex-wise equal.  Consequently,
 to evaluate $\ZZ((Q;\seq{m}_{1}v\,v\,\seq{m}_{2},\varphi))$, we can
 safely ignore the backtracking without changing its value; thus we have
 proved \eqref{eq:Z-backtrack-inv}.
\end{proof}

\section{Quantum dilogarithms and the combinatorial DT invariants}
\label{comb-DT}

Kontsevich and Soibelman \cite{Kontsevich2008,Kontsevich2010} developed
a general theory including motivic Donaldson-Thomas invariants,
wall-crossings, and cluster algebras. As a combinatorial analogue of
this, Keller \cite{Keller2011,Keller2012,Keller2013a} introduced and
studied reddening sequences and combinatorial Donaldson-Thomas
invariants.

In this section, we briefly summarize these notions and some known
facts.  The relationship with partition $q$-series is the main subject
of this paper and will be discussed in Section \ref{sec:Z-and-DT}.

\subsection{Quantum dilogarithms} 

A quantum dilogarithm series is defined by
\begin{equation}
 \EE(y;q) :=1+\frac{q^{1/2}}{q-1}y+\cdots
 +\frac{q^{n^{2}/2}}{(q^{n}-1)(q^{n}-q^{2})
 \cdots  (q^{n}-q^{n-1})}y^{n}+\cdots.
\end{equation}
It is also expressed as
\begin{align}
 \EE(y;q)
 & = \sum_{n=0}^{\infty} \frac{(-1)^{n}q^{n/2}}{(q)_{n}}y^{n}
 = \sum_{n=0}^{\infty} \frac{q^{-\frac{1}{2}n^{2}}}{(q^{-1})_{n}}y^{n}
 =\prod_{n=0}^{\infty} \frac{1}{1+q^{n+\frac{1}{2}}y}
 \\ & 
 =\exp\left(
   \sum_{k=1}^{\infty} \frac{(-y)^k  }{k (q^{-k/2} - q^{k/2})}
   \right).
 \label{eq:E=exp()}
\end{align}
We will mostly use the following form
\begin{equation}
  \EE(y;q^{-1})
 = \sum_{n=0}^{\infty} \frac{q^{\frac{1}{2}n^{2}}}{(q)_{n}}y^{n},\qquad 
  \EE(y;q)
 = \sum_{n=0}^{\infty} \frac{q^{-\frac{1}{2}n^{2}}}{(q^{-1})_{n}}y^{n}
\end{equation}

It is well-known that if $u$ and $v$ satisfies the relation $uv=qvu$,
then the pentagon identity holds
\cite{Schuetzenberger1953,Faddeev1993,Faddeev1994}:
\begin{equation}
 \label{eq:EE=EEE}
 \EE(u;q)\EE(v;q)=
  \EE(v;q)\EE(q^{-1/2}uv;q)\EE(u;q).
\end{equation}

\subsection{Reddening sequences} 
\label{sec:reddening-DT}

Consider a mutation sequence $\seq{m}=(m_{1},m_{2},\dots,m_{T})$ on a quiver $Q$:
\begin{equation}
 \label{eq:Q-seq-3}
 \xygraph{!{<0cm,0cm>;<18mm,0cm>:<0cm,20pt>::}
!{(0,0)}*+{Q(0)}="q0"
!{(1,0)}*+{Q(1)}="q1"
!{(2,0)}*+{~\cdots~}="q2"
!{(2.9,0)}*+{Q(t{-}1)}="q3"
!{(4.1,0) }*+{Q(t)}="q4"
!{(5,0)}*+{~\cdots~}="q5"
!{(6,0) }*+{Q(T)}="q6"
"q0":^{\textstyle \mu_{m_{1}}}"q1"
"q1":^{\textstyle \mu_{m_{2}}}"q2"  
"q2":"q3"  
"q3":^{\textstyle \mu_{m_{t}}}"q4"  
"q4":"q5"  
"q5":^{\textstyle \mu_{m_{T}}}"q6"  
}.
\end{equation}
The mutation sequence $\seq{m}$ is called \emph{reddening} if all
vertices of the final quiver $Q(T)$ are red.  Clearly, all maximal green
sequences are reddening, but the latter class is much wider. Not all
quivers admit reddening sequences.

The following facts are known:
\begin{theorem}[Keller\cite{Keller2012}]
 \label{thm:EEm-reddening} 
 If $\seq{m}$ and $\seq{m}'$ are reddening
 sequences on the quiver $Q$, then there is a frozen isomorphism between
 the final ice quivers $\mu_{\seq{m}}(Q^{\wedge})\simeq
 \mu_{\seq{m}'}(Q^{\wedge})$.
\end{theorem}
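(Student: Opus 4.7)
The plan is to pin down the final ice quiver $\widetilde Q(T)$ by a canonical form for its $c$-matrix, which will turn out to be the same for any reddening sequence and will in turn dictate $\widetilde B(T)$. First I would invoke sign-coherence: since $\seq m$ is reddening, every vertex of $Q(T)$ is red, so $c_i(T)\in -\N^n$ for each $i$, and by Corollary \ref{prop:C(t)-as-basis} the rows of $-C(T)$ form a $\Z$-basis of $\Z^n$ lying in $\N^n$. A $\Z$-basis of $\Z^n$ contained in $\N^n$ need not consist of unit vectors (e.g.\ $\{e_1,\,e_1+e_2\}$), so the key stronger input I need is that along any reddening sequence this basis is actually a permutation of the standard basis, giving $C(T)=-P_{\seq m}$ for some $P_{\seq m}\in S_n$. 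I would import this from the cluster-category / tilting framework (Plamondon, Keller--Yang), where the final $c$-vectors correspond to the dimension vectors of the simples of an equivalent heart and hence are primitive unit vectors.

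Next I would show, by induction on $t$ using the mutation rules \eqref{eq:matrix-mutation} applied to the block decomposition \eqref{eq:B-tilde-block}, that once the $c$-matrix collapses to a signed permutation $C(T)=-P_{\seq m}$, the principal block $B(T)$ is forced to be the conjugate $P_{\seq m}^{-1}\,B(0)\,P_{\seq m}$. This is essentially the tropical duality between $B$ and $C$ specialised at a permutation $c$-matrix, and can be verified step by step by tracking how the principal block of $\widetilde B(t)$ changes when a row/column of $C(t)$ flips sign. In particular the full matrix $\widetilde B(T)$ is determined by $(B(0),P_{\seq m})$ alone.

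To finish, I would show that $P_{\seq m}$ depends only on $Q$. For this I would invoke the invariance of the ordered product of quantum dilogarithms $\EE_Q(\seq m):=\prod_{t=1}^{T}\EE\!\left(y^{\alpha_t};\,q^{\varepsilon_t}\right)$ along reddening sequences: this is a well-defined element of $\Ah_Q$, and the permutation $P_{\seq m}$ can be read off from its semiclassical action on the quantum torus using the exponential form \eqref{eq:E=exp()}. Since two reddening sequences on the same $Q$ produce the same $\EE_Q$, they produce the same permutation. Combining with the previous step, both $\widetilde B_{\seq m}(T)$ and $\widetilde B_{\seq m'}(T)$ take the block form \eqref{eq:B-tilde-block} with identical data $(P^{-1}B(0)P,\,-P)$, exhibiting a frozen isomorphism that acts as $P$ on the non-frozen vertices and as the identity on $F$.

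The main obstacle is the first step: the statement that the final $c$-vectors of a reddening sequence are negatives of standard basis vectors is strictly stronger than sign-coherence and is not forced by the elementary mutation combinatorics developed so far. The cleanest routes go through the categorification of cluster algebras via Ginzburg dg-algebras and generalised cluster categories, or alternatively through the scattering-diagram / wall-crossing formalism; each of these requires a substantial detour from the self-contained framework of the present paper, and the invariance of $\EE_Q$ used in the third step is itself Keller's master theorem, so the proposed argument amounts to a distillation of Keller's original proof rather than an independent route.
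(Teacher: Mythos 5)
The paper does not prove this statement at all: it is imported verbatim from Keller \cite{Keller2012}, and the authors explicitly remark (after Theorem \ref{thm:EEm=EEm'}) that although the statement is combinatorial, all known proofs go through the categorification by Ginzburg dg-algebras. So there is no in-paper argument to compare yours against; the only question is whether your sketch constitutes an independent proof. It does not, and you have correctly diagnosed why. Of your three steps, only the middle one is genuinely elementary: once $C(T)=-P$ is known, $B(T)=PB(0)P^{\top}$ follows from Proposition \ref{prop:CBC=B} (tropical duality), which the paper does prove from the mutation rules. The first step --- that the final $c$-vectors of a reddening sequence are negatives of standard basis vectors --- is exactly Theorem \ref{thm:reddening=rev-end} (Br\"ustle--Dupont--P\'erotin), which is itself an unproven import in this paper and whose known proofs again require the categorical machinery; sign-coherence plus $\det C(T)=\pm 1$ is, as you note, strictly weaker.

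The third step is the genuine logical failure: within this paper, the well-definedness of $\EE_{Q}$ independent of the reddening sequence is \emph{deduced from} Theorem \ref{thm:EEm-reddening} together with Theorem \ref{thm:EEm=EEm'}, so invoking it here is circular unless you re-derive it from scratch inside the wall-crossing or dg-categorical formalism --- at which point you are reproducing Keller's proof rather than giving a new one. A smaller point: even granting everything, recovering the permutation $P_{\seq{m}}$ from the power series $\EE_{Q}\in\Ah_{Q}$ is not automatic (and is more than is needed --- frozen isomorphism only requires that $\varphi'\varphi^{-1}$ extend to an automorphism of the principal part, not that the two permutations coincide). In short: your sketch is an honest and accurate map of where the difficulty lives, but it does not close the gap, and the paper itself deliberately leaves it open by citation.
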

\begin{theorem}[Br{\"u}stle--Dupont--P{\'e}rotin \cite{Bruestle2013}]
 \label{thm:reddening=rev-end} Let $\seq{m}=(m_{1},\cdots,m_{T})$ be a
 reddening sequence. Then the associated final ice quiver
 $\widetilde{Q}(T)=\mu_{\seq{m}} (Q^{\wedge})$ is frozen isomorphic to a
 co-framed quiver $Q^{\vee}$, that is, there is a permutation $\varphi$
 of $Q_{0}=\{1,2,\dots,n\}$ such that
 \begin{enumerate}[(i)]
  \item $\varphi$ represents an isomorphism of quivers $Q(T)\simeq
	Q(0)$, and
  \item in the final ice quiver $\widetilde{Q}(T)$, $i'\to \varphi(i)$
	is the only arrow starting from $i'$ and there is no arrow
	pointing to $i'$. In terms of $c$-vectors, we have
	$c_{i}(T)=-e_{\varphi(i)}$.
 \end{enumerate}
\end{theorem}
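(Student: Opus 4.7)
The plan is to establish (ii) first and deduce (i) from it.

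For (ii), the reddening hypothesis together with the sign coherence theorem immediately gives $c_i(T)\in -\N^n$, so $M:=-C(T)$ has non-negative integer entries, while Corollary~\ref{prop:C(t)-as-basis} ensures $M\in GL_n(\Z)$. What remains is to show that such an $M$ must be a permutation matrix. This is \emph{not} a consequence of non-negativity and $\det=\pm 1$ alone (witness $\bigl(\begin{smallmatrix}1&1\\0&1\end{smallmatrix}\bigr)$), so some extra input specific to the cluster-algebraic origin of the $c$-vectors is required.

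The cleanest route I see is first to show that $M^{-1}$ is also non-negative, after which an elementary counting argument concludes: from $\sum_k M_{ik}(M^{-1})_{ki}=1$, a sum of non-negative integers, exactly one summand equals $1$, and the off-diagonal identities $\sum_k M_{ik}(M^{-1})_{kj}=0$ for $i\neq j$ then force the resulting assignment $i\mapsto\sigma(i)$ to be a bijection and $M$ to be the permutation matrix of $\sigma$. To secure $M^{-1}\geq 0$, the natural tool is the duality between $c$-vectors and $g$-vectors: the $g$-matrix $G(T)$ satisfies a Nakanishi--Zelevinsky type pairing with $C(T)$ (schematically $G(T)^{\top}C(T)=-I$) and enjoys its own sign coherence; the reddening condition on $C(T)$ then translates into positivity of $G(T)$, yielding $M^{-1}=G(T)^{\top}\geq 0$. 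I expect this dual sign-coherence / duality step to be the main technical obstacle; it can be supplied by the results of Derksen--Weyman--Zelevinsky, Plamondon, or Nagao cited in the introduction.

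Once (ii) is in hand --- write $P$ for the permutation matrix of $\varphi$ --- we have $C(T)=-P^{\top}$ and hence the block form $\widetilde{B}(T)=\bigl(\begin{smallmatrix}B(T)&-P^{\top}\\ P&0\end{smallmatrix}\bigr)$. Conjugating the non-frozen indices by $P$ puts this in the form $\bigl(\begin{smallmatrix}PB(T)P^{\top}&-I\\ I&0\end{smallmatrix}\bigr)$, which is precisely the ice-quiver matrix of a co-framed quiver whose core has $B$-matrix $PB(T)P^{\top}$. To identify this core with $Q(0)$ --- i.e.\ to establish (i) --- I would run the reverse mutation sequence $\mu_{m_T}\cdots\mu_{m_1}$ applied to $\widetilde{Q}(T)$; since this must return $\widetilde{Q}(0)=Q^{\wedge}$, comparing the evolution of the frozen and non-frozen blocks pins down $PB(T)P^{\top}=B(0)$, which is exactly the statement that $\varphi$ realizes a quiver isomorphism $Q(T)\xrightarrow{\sim}Q(0)$. (Alternatively, once (ii) is known, one may invoke Keller's Theorem~\ref{thm:EEm-reddening} to transport (i) along a frozen isomorphism from any model reddening sequence where it is verifiable by hand, e.g.\ source mutations in a topological order on an acyclic quiver.)
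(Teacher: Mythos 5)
First, a point of comparison: the paper does not prove this theorem at all. It is imported from Br{\"u}stle--Dupont--P{\'e}rotin \cite{Bruestle2013}, and the paper explicitly remarks at the end of Section \ref{comb-DT} that the known proofs of Theorems \ref{thm:EEm-reddening} and \ref{thm:reddening=rev-end} are combinatorial in statement but categorical in proof, relying on the Ginzburg dg-algebra \cite{Ginzburg2006}. So your linear-algebraic strategy via tropical dualities is genuinely different from anything the paper offers, and there is no in-paper argument to measure it against step by step.

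On the substance: the skeleton is right and you correctly locate the crux. Sign coherence plus the reddening hypothesis give $M=-C(T)\geq 0$ with $M\in GL_{n}(\Z)$ (Corollary \ref{prop:C(t)-as-basis}), and the counting argument that a non-negative integer matrix with non-negative integer inverse is a permutation matrix is standard and correct. The genuine gap is the sentence ``the reddening condition on $C(T)$ then translates into positivity of $G(T)$'': the pairing $G(T)^{\top}C(T)=\pm I$ says nothing by itself about the sign of $G(T)$, and sign coherence of $g$-vectors only guarantees that each column of $G(T)$ has \emph{some} definite sign, not which one. The missing step is short but essential: each row of $C(T)^{-1}$ is sign-coherent (either because $C(T)^{-1}$ is itself the $c$-matrix of the reversed sequence based at $Q(T)$, by the dualities of Nakanishi--Zelevinsky \cite{Nakanishi2012a}, or via your $g$-vector route), and a nonzero non-negative row of $C(T)^{-1}$, paired against the non-positive matrix $C(T)$, could only produce non-positive entries of $C(T)^{-1}C(T)=I$, contradicting the $1$ on the diagonal; hence every row of $C(T)^{-1}$ is non-positive and $M^{-1}\geq 0$. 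With that supplied, (ii) holds. For (i) you do not need to re-run the mutation sequence or invoke Theorem \ref{thm:EEm-reddening}: the paper's own Proposition \ref{prop:CBC=B} gives $B(T)=C(T)B(0)C(T)^{\top}=PB(0)P^{\top}$ once $C(T)=-P$, which is precisely the statement that $\varphi$ is a quiver isomorphism $Q(T)\simeq Q(0)$.
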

Thanks to Theorem \ref{thm:reddening=rev-end}, every reddening sequence
$\seq{m}$ is naturally associated with a mutation loop: we can use
$\varphi$ as a boundary condition to make up a mutation loop $\gamma
=(Q;\seq{m},\varphi)$, which will be called \emph{reddening mutation
loop} corresponding to $\seq{m}$.

\subsection{Combinatorial DT invariants}

Following Keller, let us associate a quantum dilogarithm $
\EE(y^{\alpha_{t}};q^{\varepsilon_{t}})$ for each mutation
$\mu_{m_{t}}:Q(t-1)\to Q(t)$. Here $\varepsilon_{t}$ is the sign of
$\mu_{m_{t}}$ (see \eqref{eq:eps-def}) and
$\alpha_{t}=\varepsilon_{t}c_{m_{t}}(t-1)\in \N^{n}$ is the
(sign-corrected) $c$-vector (see \eqref{eq:alpha-def}).  For a mutation
sequence \eqref{eq:Q-seq-3}, we consider the following ordered product
of these:
\begin{equation}
  \EE(Q;\seq{m}):=
  \EE(y^{\alpha_{1}};q^{\varepsilon_{1}})
  \EE(y^{\alpha_{2}};q^{\varepsilon_{2}})
  \cdots 
  \EE(y^{\alpha_{T}};q^{\varepsilon_{T}})\quad \in \Ah_{Q}.
\end{equation}
Here, the algebra $\Ah_{Q}$ and the skew-symmetric form $\vev{~,~}$ are
always defined in terms of the initial quiver $Q=Q(0)$.  The product of
quantum dilogarithms enjoy the following remarkable property.
\begin{theorem}[Keller\cite{Keller2012}, Nagao\cite{Nagao2011}]
 \label{thm:EEm=EEm'} If $\seq{m}$ and $\seq{m}'$ are two mutation
 sequences such that there is a frozen isomorphism between
 $\mu_{\seq{m}}(Q^{\wedge})$ and $\mu_{\seq{m}'}(Q^{\wedge})$, then we
 have $\EE(Q;\seq{m})=\EE(Q;\seq{m}')$.
\end{theorem}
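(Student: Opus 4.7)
The plan is to prove that the ordered product $\EE(Q;\seq{m})$ depends only on the frozen isomorphism class of the final framed quiver $\mu_{\seq{m}}(Q^{\wedge})$; the theorem is then immediate. The strategy is to identify a set of elementary moves on mutation sequences that (a) each preserve the ordered product, and (b) together generate the equivalence relation ``having frozen-isomorphic final framed quivers''.

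There are three natural elementary moves to consider. First, \emph{backtracking} --- inserting or deleting a pair $(v,v)$ of consecutive mutations at the same vertex. At the $\EE$-level this is the formal identity $\EE(y^{\alpha};q)\EE(y^{\alpha};q^{-1})=1$, which is the inversion identity (Corollary \ref{prop:EEinv-cor}) already used in the proof of Theorem \ref{prop:Z-backtrack-inv}. Second, \emph{commutation} --- swapping two consecutive mutations $\mu_u\mu_v$ at vertices unconnected in the intermediate quiver, for which Lemma \ref{prop:c-vec-change} gives $\vev{\alpha_t,\alpha_{t+1}}=0$ so that the two $\EE$-factors commute in $\Ah_Q$. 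Third, the \emph{pentagon move} --- a local rearrangement of three consecutive mutations corresponding to the two maximal green sequences of the rank-2 subquiver in Figure \ref{fig:pentagon}; here $y^{\alpha_t}y^{\alpha_{t+1}} = q\,y^{\alpha_{t+1}}y^{\alpha_t}$, and the invariance of $\EE$ follows directly from the pentagon identity \eqref{eq:EE=EEE}.

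Verifying that each of these three moves leaves the final ice quiver $\widetilde{Q}(T)$ unchanged is a direct inspection of the matrix-mutation rule \eqref{eq:matrix-mutation} applied to the $c$-matrix block \eqref{eq:B-tilde-block}, and verifying that each preserves $\EE(Q;\seq{m})$ is a local computation in the quantum torus $\Ah_Q$ using the corresponding $\EE$-identity. The main obstacle, and the deepest step, is the converse: showing that any two mutation sequences $\seq{m},\seq{m}'$ producing frozen-isomorphic final framed quivers can be connected by a finite chain of moves of the three types above. This is a confluence statement for the mutation groupoid of $Q^{\wedge}$; the cleanest route I would try is to follow Nagao and reinterpret $\EE(Q;\seq{m})$ as a wall-crossing / motivic Donaldson--Thomas generating series for the stability condition determined by the ordered $c$-vectors $\alpha_1,\dots,\alpha_T$, whereupon the Kontsevich--Soibelman wall-crossing formula implies that the series depends only on the final stability chamber, equivalently only on the final framed quiver.
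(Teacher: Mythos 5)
This theorem is quoted in the paper from Keller and Nagao and is \emph{not} proved there; the known proofs are not combinatorial confluence arguments but go through the categorification by Ginzburg dg-algebras (Keller) or motivic Donaldson--Thomas theory (Nagao), as the paper's own remark about the companion statements indicates. Your part (a) is essentially sound as a collection of local computations: deletion of a backtracking pair corresponds to $\EE(y^{\alpha};q^{\varepsilon})\EE(y^{\alpha};q^{-\varepsilon})=1$, i.e.\ Proposition \ref{prop:EqEqinv} (rather than Corollary \ref{prop:EEinv-cor}, which is its coefficient-wise shadow); commutation of mutations at non-adjacent vertices does give $\vev{\alpha_{t},\alpha_{t+1}}=0$ via Proposition \ref{prop:CBC=B} and Lemma \ref{prop:c-vec-change}; and the pentagon move reduces to \eqref{eq:EE=EEE}, although you still need to check that the required relation $\vev{\alpha_{t},\alpha_{t+1}}=1$ holds for the actual $c$-vectors in $\Z^{n}$ and for every sign pattern that can occur when the pentagon configuration sits inside a longer, not necessarily green, sequence --- this is more delicate than you indicate.

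The genuine gap is step (b). The claim that any two mutation sequences with frozen-isomorphic final framed quivers are connected by a finite chain of squares, commutations and pentagons is not a routine verification: it amounts to a presentation of the cluster exchange groupoid by these relations, which is a hard (and, at the level of generality needed here, unproved) statement, and no proof of Theorem \ref{thm:EEm=EEm'} along these lines is known. Your fallback --- reinterpreting $\EE(Q;\seq{m})$ as a DT generating series attached to the stability data determined by the ordered $c$-vectors and invoking the Kontsevich--Soibelman wall-crossing formula --- is indeed the skeleton of Nagao's actual argument, but as written it restates the theorem rather than proving it: the assertion that the ordered product depends only on the final chamber \emph{is} the content of the statement, and constructing the identification (via the motivic Hall algebra and the integration map, or via Keller's derived-equivalence argument) is where all the work lies. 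So the proposal does not contain a proof; it contains a plausible but unestablished reduction together with a pointer to the literature where the theorem is actually proved.
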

Theorems \ref{thm:EEm-reddening} and \ref{thm:EEm=EEm'} imply that if
$Q$ admits a reddening sequence $\seq{m}$, the power series
\begin{equation}
  \EE_{Q}:= \EE(Q;\seq{m})\in \Ah_{Q}.
\end{equation}
is independent of the choice of the reddening sequence $\seq{m}$ and is
canonically associated with $Q$.  Keller \cite{Keller2013a} named this
invariant as \emph{combinatorial Donaldson-Thomas (DT) invariant}.

%Here a $Q^{\vee}$ is constructed from $Q$ in the same way
% as the framed quiver of $Q$.
%$Q^{\wedge}$, except that the added arrows are
% $i'\to i$ instead of $i\to i'$.

The pentagon identity \eqref{eq:EE=EEE} is nothing but the combinatorial
DT invariant of $A_{2}$ quiver $Q=(1{\to}2)$ corresponding to the two
reddening sequences $\seq{m}=(1,2)$ and $\seq{m}'=(2,1,2)$ depicted in
Figure \ref{fig:pentagon}.  $\gamma=(Q;(1,2), \id)$ and
$\gamma'=(Q;(2,1,2), (12))$ are the reddening mutation loops
corresponding to $\seq{m}$ and $\seq{m}'$, respectively.

\begin{rem}
 The statements of Theorems \ref{thm:EEm-reddening} and
 \ref{thm:reddening=rev-end} are combinatorial, but the known proofs are
 based on categorification in terms of Ginzburg dg-algebra
 \cite{Ginzburg2006}.
\end{rem}

\section{Partition $q$-series and the combinatorial DT invariant}
\label{sec:Z-and-DT}

We have seen that every reddening sequence $\seq{m}$ is canonically
associated with a mutation loop $\gamma=(Q;\seq{m},\varphi)$.  It is
thus natural to compare the combinatorial $DT$ invariant
$\EE_{Q}=\EE(Q;\seq{m})$ with the partition $q$-series $\ZZ(\gamma)$.

In this section, we show that there is a precise match between the
partition $q$-series and the product of quantum dilogarithms for any
reddening mutation sequence.  Therefore, the partition $q$-series
provide a state-sum interpretation (fermionic sum formula) for the
combinatorial Donaldson-Thomas invariants.

The following theorem is the main result of this paper.
\begin{theorem}
 \label{thm:main} Let $\gamma=(Q;\seq{m},\varphi)$ be a reddening
 mutation loop. Then, the partition $q$-series and the
 combinatorial Donaldson-Thomas invariant are related as
 \begin{equation}
  \label{eq:ee=zz-rev-end}
   \boxed{\quad
   \rule[-7pt]{0pt}{20pt}
   \ZZ(\gamma)=\overline{\EE(Q;\seq{m})}.
   \quad
   }
 \end{equation}
 Here $\overline{\phantom{x}}:\Ah_{Q}\to \Ah_{Q}$ is a $\Q$-algebra
 anti-automorphism defined in \eqref{eq:anti-autom}.
\end{theorem}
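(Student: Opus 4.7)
My plan is to match $\ZZ(\gamma)$ with $\overline{\EE(Q;\seq{m})}$ term by term as elements of $\Ah_{Q}$. Using the series form $\EE(y;q^{\varepsilon})=\sum_{n\geq 0}q^{-\varepsilon n^{2}/2}y^{n}/(q^{-\varepsilon})_{n}$, I would expand the ordered product $\EE(Q;\seq{m})$ as a sum over $\seq{k}\in \N^{T}$. Applying $\overline{\phantom{x}}$ sends $q\mapsto q^{-1}$ and reverses products, and I would collapse the reversed monomial $y^{k_{T}\alpha_{T}}\cdots y^{k_{1}\alpha_{1}}$ into normal form via $y^{\alpha}y^{\beta}=q^{\frac{1}{2}\vev{\alpha,\beta}}y^{\alpha+\beta}$. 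Comparing with the definition (\ref{eq:zz-def}), the theorem reduces to the scalar identity
\begin{equation*}
(\star)\qquad \sum_{t=1}^{T}\varepsilon_{t}k_{t}(k_{t}^{\vee}-k_{t})=\sum_{1\leq j<i\leq T}k_{i}k_{j}\vev{\alpha_{i},\alpha_{j}}.
\end{equation*}

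The key to proving $(\star)$ is the auxiliary ``total $s$-vector'' $S(t):=\sum_{i}s_{i}(t)\,c_{i}(t)\in \Q^{n}$. A direct computation, comparing the green and red formulas in (\ref{eq:k-s-rel})/(\ref{eq:s-evolution}) against the $c$-vector mutation (\ref{eq:c-vec-change}), should yield the clean uniform evolution $S(t)-S(t-1)=-k_{t}\alpha_{t}$, hence $S(t)=S(0)-\sum_{s\leq t}k_{s}\alpha_{s}$. The reddening hypothesis then enters through Theorem \ref{thm:reddening=rev-end}, which forces $c_{i}(T)=-e_{\varphi(i)}$; combined with the boundary identification $s_{i}(T)=s_{\varphi(i)}(0)$, this yields $S(T)=-S(0)$, which in turn pins down $S(0)=\tfrac{1}{2}\sum_{t}k_{t}\alpha_{t}$.

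The second ingredient is the standard covariance identity $\vev{c_{a}(t),c_{b}(t)}=B(t)_{ab}$, provable by induction from (\ref{eq:matrix-mutation}) and (\ref{eq:c-vec-change}). Applied to (\ref{eq:k-kv-rel}), this rewrites the left side of $(\star)$ as $\sum_{t}k_{t}\vev{S(t-1),\alpha_{t}}$. Substituting $S(t-1)=S(0)-\sum_{s<t}k_{s}\alpha_{s}$, invoking $\vev{S(0),S(0)}=0$ (skew-symmetry) on the $S(0)$ contribution, and applying $\vev{\alpha_{s},\alpha_{t}}=-\vev{\alpha_{t},\alpha_{s}}$ to reindex, the sum collapses to $\sum_{i>j}k_{i}k_{j}\vev{\alpha_{i},\alpha_{j}}$, proving $(\star)$.

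The main obstacle I anticipate is the uniform evolution $S(t)-S(t-1)=-k_{t}\alpha_{t}$. The green and red cases involve quite different formulas for the updates of both $s_{m_{t}}$ and $c_{m_{t}}$, and only through a nontrivial cancellation --- between the $\sum_{a\to v}s_{a}$ or $\sum_{v\to b}s_{b}$ generated by the $c$-vector change and the corresponding term in the $s$-evolution --- do both signs produce the same simple expression. This cancellation is precisely where the refined definition of $\ZZ(\gamma)$ (with different rules for red and green mutations) justifies itself; once it is dispatched carefully, everything else reduces to bookkeeping in skew-symmetric linear algebra.
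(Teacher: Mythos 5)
Your proposal is correct and follows essentially the same route as the paper: your ``total $s$-vector'' $S(t)=\sum_i s_i(t)c_i(t)$ is exactly the paper's state vector $\psi(t)$ of \eqref{eq:psi-def}, your uniform evolution $S(t)-S(t-1)=-k_t\alpha_t$, anti-periodicity $S(T)=-S(0)$, and the covariance identity $\vev{c_i(t),c_j(t)}=B(t)_{ij}$ are Propositions \ref{prop:psi-change}, \ref{prop:rev-end} and \ref{prop:CBC=B}, and your reduction to the scalar identity $(\star)$ coincides (after reindexing by skewness) with \eqref{eq:zz-ee}, established via the same telescoping computation as Proposition \ref{prop:kkv-k2-rel}. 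The only cosmetic difference is that you reverse the product of dilogarithms before normal-ordering, whereas the paper normal-orders first and then applies the bar.
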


Section \ref{sec:Examples} contains various examples of partition
$q$-series covered by Theorem \ref{thm:main}.  The rest of this section
is devoted to the proof of Theorem \ref{thm:main}.

\begin{rem}
In the paper \cite{Cecotti2010}, Cecotti-Neitzke-Vafa propose
a relation between four-dimensional gauge theories and parafermionic
conformal field theories. In particular, they found a wonderful
observation that the canonical trace of a special product of quantum
dilogarithms associated with a Dynkin diagram is written in terms of
characters. It would be interesting to find a precise relation with
their work.
\end{rem}

\subsection{Evolution along mutation sequence}

In this subsection, we collect some results how the $s$-variables
$\{s_{i}(t)\}$ and $c$-vectors $\{c_{i}(t)\}$ evolve along the mutation
sequence \eqref{eq:Q-seq-3}. These are needed to keep track of $\N^{n}$
grading of partition $q$-series. 

Proposition \ref{prop:CBC=B} is due to Nakanishi-Zelevinsky
\cite{Nakanishi2012a}. A proof is given here to make this paper
self-contained.
\begin{prop}
 \label{prop:CBC=B} For the mutation sequence \eqref{eq:Q-seq-3}, we have  
 \begin{equation}
  \label{eq:Bcc-t}
  B(t)_{ij}=\vev{c_{i}(t),c_{j}(t)},\qquad (0\leq t\leq T)
 \end{equation}
 or equivalently, 
 \begin{equation}
  \label{eq:CBC=B}
   C(t)B(0)C(t)^{\top} =B(t).\qquad  (0\leq t\leq T)
 \end{equation}
 % Here $C(t)^{\top}$ is the transpose of $C(t)$.
\end{prop}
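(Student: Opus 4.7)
The plan is to prove the equivalent matrix identity \eqref{eq:CBC=B}, or equivalently the pointwise identity \eqref{eq:Bcc-t}, by induction on $t$. The base case $t=0$ is immediate: $c_i(0)=e_i$ so $C(0)$ is the identity, and \eqref{eq:Bcc-t} reduces to the definition $\langle e_i,e_j\rangle = B(0)_{ij}$ from \eqref{def:vev}.

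For the inductive step, suppose \eqref{eq:Bcc-t} holds at time $t$, and set $v=m_{t+1}$. I would substitute the transformation rule of Lemma \ref{prop:c-vec-change} into $\langle c_i(t+1), c_j(t+1)\rangle$ and expand by bilinearity of $\langle\cdot,\cdot\rangle$. The induction hypothesis then converts every pairing $\langle c_a(t),c_b(t)\rangle$ into the matrix entry $B(t)_{ab}$, so that the right-hand side becomes a polynomial expression in entries of $B(t)$ and in the multiplicities $Q(t)_{i,v}, Q(t)_{v,i}$. The task is to verify this expression agrees with the matrix mutation rule \eqref{eq:matrix-mutation} for $B(t+1)_{ij}$.

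I would split the verification into the natural cases. When $i=v$ or $j=v$ (and also $i=j=v$), one c-vector just flips sign, so the identity follows from the hypothesis and the antisymmetry $B(t)_{vv}=0$ on the initial-quiver side (matching the rule that $B(t+1)_{iv}=-B(t)_{iv}$). The substantive case is $i\neq v$ and $j\neq v$. In the green case ($\mu_v$ green) the expansion produces
\[
\langle c_i(t+1),c_j(t+1)\rangle
= B(t)_{ij}+Q(t)_{i,v}\,B(t)_{vj}+Q(t)_{j,v}\,B(t)_{iv}
+Q(t)_{i,v}Q(t)_{j,v}\,B(t)_{vv},
\]
and since $B(t)_{vv}=0$, I need only check the identity
\[
Q(t)_{i,v}\,B(t)_{vj}+Q(t)_{j,v}\,B(t)_{iv}
=\sgn\bigl(B(t)_{iv}\bigr)\max\bigl(B(t)_{iv}B(t)_{vj},0\bigr),
\]
which is a direct four-way sign analysis on $(B(t)_{iv},B(t)_{vj})$ using $Q(t)_{i,v}=\max(B(t)_{iv},0)$ and $Q(t)_{j,v}=\max(-B(t)_{vj},0)$. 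The red case is completely analogous, with the roles of $Q(t)_{i,v}$ and $Q(t)_{v,i}$ exchanged in accordance with the second branch of \eqref{eq:c-vec-change}; the resulting elementary identity is the same up to relabeling and yields the same right-hand side $B(t+1)_{ij}$ from \eqref{eq:matrix-mutation}.

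The only mild obstacle is the bookkeeping in that sign analysis: one has to confirm that Lemma \ref{prop:c-vec-change} uses $Q(t)_{i,v}$ in the green case and $Q(t)_{v,i}$ in the red case precisely so that the cross terms reproduce the nonlinear piece $\sgn(B(t)_{iv})\max(B(t)_{iv}B(t)_{vj},0)$ without any dependence on the color of the mutation at the level of $B$. Once this check is done in all four sign patterns, the induction closes and both \eqref{eq:Bcc-t} and its matrix form \eqref{eq:CBC=B} follow for all $0\leq t\leq T$.
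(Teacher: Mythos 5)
Your proposal is correct and follows essentially the same route as the paper: induction on $t$ with base case $c_i(0)=e_i$, substitution of the $c$-vector mutation rule from Lemma \ref{prop:c-vec-change}, expansion by bilinearity and skew-symmetry, the case split on whether $i$ or $j$ equals $v$ and on the color of the mutation, and a final elementary comparison with \eqref{eq:matrix-mutation}. The only cosmetic difference is that the paper closes the green/red cases by substituting $B(t)_{ab}=Q(t)_{ab}-Q(t)_{ba}$ and cancelling cross terms rather than by your explicit four-way sign analysis; both verifications are equivalent.
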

\begin{proof}
 We prove \eqref{eq:Bcc-t} by induction on $t$. The case $t=0$ is clear
 from \eqref{def:vev} and $c_{i}(0)=e_{i}$. Assuming \eqref{eq:Bcc-t} to
 hold for $t$, we consider the mutation $\mu_{v}:Q(t)\to Q(t+1)$ where
 $v=m_{t+1}$. By skewness of $\vev{~,~}$, it suffices to consider
 the following four cases:
 \begin{itemize}
  \itemsep=2ex 
  \item (case A-1) $i=v$, $j\neq v$.
	 \begin{align*}
	  & \vev{c_{v}(t+1),c_{j}(t+1)} 
	  = \vev{
	  -c_{v}(t),c_{j}(t)+ Q(t)_{j,v} c_{v}(t)}
	  & \quad\text{(using \eqref{eq:c-vec-change})}
	  \\
	  & = - \vev{c_{v}(t),c_{j}(t)}	  
	  & \llap{(\text{skew-symmetry of $\vev{~,~}$})}
	  \\	  & 
	  = -B(t)_{v,j}	
	  & \llap{(\text{by induction hypothesis})}
	  \\
	  & =  B(t+1)_{v,j}.
	  & \llap{(\text{using \eqref{eq:matrix-mutation}})}
	 \end{align*}

  \item (case A-2) $i\neq v$, $j=v$. The proof closely parallels that of
       (case A-1).

  \item (case B-1) $i\neq v$, $j\neq v$, $\mu_{v}$ is green mutation
	 \begin{align*}
	  & \vev{c_{i}(t+1),c_{j}(t+1)}
	  \\ &
	  = \vev{
	  c_{i}(t)+Q(t)_{i,v}c_{v}(t),
	  c_{j}(t)+Q(t)_{j,v}c_{v}(t)
	  }
	  &
	  \llap{\text{(using \eqref{eq:c-vec-change})}}
	  \\[.5ex]
	  & = \vev{c_{i}(t),c_{j}(t)} + 
	  Q(t)_{i,v} \vev{c_{v}(t), c_{j}(t)}
	  +
	  Q(t)_{j,v} \vev{ c_{i}(t), c_{v}(t) }
	  \\
	  &
	  & \llap{(\text{skew-symmetry of $\vev{~,~}$})}
	  \\[.5ex]
	  & = B(t)_{i,j} +
	  Q(t)_{i,v} B(t)_{v,j}
	  +
	  Q(t)_{j,v} B(t)_{i,v}
	  &  \quad \llap{(\text{by induction hypothesis})}
	  \\[.5ex]
	  & = B(t)_{i,j} +
	  Q(t)_{i,v} (Q(t)_{v,j}-Q(t)_{j,v})
	  +
	  Q(t)_{j,v} (Q(t)_{i,v}-Q(t)_{v,i})
	  \\[.5ex]
	  & = B(t)_{i,j} +
	  Q(t)_{i,v} Q(t)_{v,j}
	  - Q(t)_{j,v}Q(t)_{v,i}
	  \\[.5ex]
	  & =  B(t+1)_{i,j}.
	  & \llap{(\text{using \eqref{eq:matrix-mutation}})}
	 \end{align*}
  \item (case B-2)  $i\neq v$, $j\neq v$, $\mu_{v}$ is red
	mutation. The proof is similar to that of (case B-1).
 \end{itemize}
 In conclusion, \eqref{eq:Bcc-t} is also true for $t+1$.
\end{proof}

Since $c$-vectors $\{c_{i}(t)\}$ form a basis of $\Z^{n}$ for each $t$,
it is natural to introduce the \emph{state vector} of $Q(t)$ defined by
\begin{equation}
 \label{eq:psi-def} \psi(t):=\sum_{i=1}^{n} s_{i}(t) c_{i}(t) ~\in
 \Z^{n}\qquad (0\leq t\leq T).
\end{equation}
\begin{prop} 
 \label{prop:psi-change} 
 Along the mutation sequence \eqref{eq:Q-seq-3},
 the state vector changes as
 \begin{equation}
  \label{eq:psi-change} 
   \psi(t)=
   \psi(t{-}1)   -k_{t} \alpha_{t},\qquad (t=1,\dots,T).
 \end{equation}
\end{prop}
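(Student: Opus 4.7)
The plan is a direct term-by-term calculation of $\psi(t)-\psi(t-1)$, separating the contribution of the mutating vertex $v=m_t$ from the rest, and treating the green and red cases in parallel. The key ingredients are already available: Lemma \ref{prop:c-vec-change} tells us how the $c$-vectors transform, and the linear relation \eqref{eq:k-s-rel} (equivalently \eqref{eq:s-evolution}) tells us how the $s$-variables transform.

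First I would split $\psi(t)-\psi(t-1)=\sum_i[s_i(t)c_i(t)-s_i(t-1)c_i(t-1)]$ according to whether $i=v$ or $i\neq v$. For $i\neq v$ the $s$-variable is unchanged, $s_i(t)=s_i(t-1)$, and by Lemma \ref{prop:c-vec-change} the change $c_i(t)-c_i(t-1)$ is a nonnegative integer multiple of $c_v(t-1)$, namely $Q(t-1)_{i,v}c_v(t-1)$ in the green case and $Q(t-1)_{v,i}c_v(t-1)$ in the red case. For $i=v$ we have $c_v(t)=-c_v(t-1)$, so the $i=v$ term contributes $-\bigl(s_v(t)+s_v(t-1)\bigr)c_v(t-1)$. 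Collecting everything, the difference is a scalar multiple of $c_v(t-1)$.

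Second, I would simplify the resulting scalar coefficient using \eqref{eq:k-s-rel}. In the green case one obtains
\begin{equation*}
\psi(t)-\psi(t-1)=\Bigl[-s_v(t)-s_v(t-1)+\sum_{a\to v}s_a(t-1)\Bigr]c_v(t-1)=-k_t\,c_v(t-1),
\end{equation*}
where the sum runs over arrows of $Q(t-1)$. In the red case the analogous computation gives
\begin{equation*}
\psi(t)-\psi(t-1)=\Bigl[-s_v(t)-s_v(t-1)+\sum_{v\to b}s_b(t-1)\Bigr]c_v(t-1)=+k_t\,c_v(t-1).
\end{equation*}
Since $\alpha_t=\varepsilon_t c_{m_t}(t-1)$ equals $+c_v(t-1)$ in the green case and $-c_v(t-1)$ in the red case, both equations collapse to $\psi(t)-\psi(t-1)=-k_t\alpha_t$, as claimed.

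There is no real obstacle; this is a bookkeeping exercise. The only thing to be careful about is the green/red asymmetry: the two rules in \eqref{eq:c-vec-change} involve $Q(t)_{i,v}$ versus $Q(t)_{v,i}$, and the two rules in \eqref{eq:k-s-rel} involve arrows $a\to v$ versus $v\to b$; one must check that these asymmetries match up in such a way that the same formula $\psi(t)=\psi(t-1)-k_t\alpha_t$ emerges once the sign $\varepsilon_t$ in the definition of $\alpha_t$ is taken into account. After that, the statement follows immediately without any induction (the argument is purely local at the single mutation step $t-1\to t$).
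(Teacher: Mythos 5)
Your proposal is correct and follows essentially the same route as the paper's proof: both split the sum defining $\psi(t)$ into the mutating vertex $v=m_t$ and the remaining vertices, apply Lemma \ref{prop:c-vec-change} together with the $k$--$s$ relation \eqref{eq:k-s-rel} (the paper uses its equivalent form \eqref{eq:s-evolution}), observe that the cross terms involving $Q(t{-}1)_{i,v}$ (resp.\ $Q(t{-}1)_{v,i}$) cancel, and treat the green and red cases in parallel so that the sign in $\alpha_t=\varepsilon_t c_{m_t}(t{-}1)$ absorbs the asymmetry. The only cosmetic difference is that you organize the computation as $\psi(t)-\psi(t{-}1)$ rather than expanding $\psi(t)$ directly.
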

\begin{proof} 
 There are two cases to be considered.

 Case 1) $\mu_{m_{t}}:Q(t-1)\to Q(t)$ is green ($\varepsilon_{t}=+1$):
  \begin{align*}
   \psi(t)&=\sum_{i} s_{i}(t)c_{i}(t)
   \\[-1ex]
   & =s_{m_{t}}(t)c_{m_{t}}(t)
   +\sum_{i\neq m_{t}} s_{i}(t)c_{i}(t)
   \\[-1ex]
   & = \Bigl(k_{t}-s_{m_{t}}(t{-}1)+\sum_{a} Q(t{-}1)_{a,m_{t}} 
   s_{a}(t{-}1)\Bigr)(-c_{m_{t}}(t{-}1))
   &&\qquad& (\text{by \eqref{eq:s-evolution}})
   \\[-1ex]
   & \qquad    +\sum_{i\neq m_{t}} s_{i}(t{-}1) 
   \bigl(c_{i}(t{-}1)+Q(t{-}1)_{i,m_{t}}c_{m_{t}}(t{-}1)\bigr) 
   &&& (\text{by \eqref{eq:c-vec-change}})
   \\[-1ex]
   & = \bigl(k_{t}-s_{m_{t}}(t{-}1)\bigr)(-c_{m_{t}}(t{-}1))
   +\sum_{i\neq m_{t}} s_{i}(t{-}1)
   \bigl(c_{i}(t{-}1)\bigr)
   \\[-1ex]
   & = -k_{t} c_{m_{t}}(t{-}1)+\sum_{i} s_{i}(t{-}1)c_{i}(t{-}1)
   \\[-.5ex]
   & =\psi(t{-}1) -k_{t} \varepsilon_{t}c_{m_{t}}(t{-}1)
   \\[.5ex]
   & = \psi(t{-}1)-k_{t} \alpha_{t}.
  \end{align*}

 Case 2) $\mu_{m_{t}}:Q(t-1)\to Q(t)$ is red ($\varepsilon_{t}=-1$):
  \begin{align*}
   \psi(t)&=\sum_{i} s_{i}(t)c_{i}(t)
   \\[-1ex]
   & =s_{m_{t}}(t)c_{m_{t}}(t)
   +\sum_{i\neq m_{t}} s_{i}(t)c_{i}(t)
   \\[-1ex]
   & = \Bigl(-k_{t}-s_{m_{t}}(t{-}1)+\sum_{b} Q(t{-}1)_{m_{t},b}
   s_{b}(t{-}1)\Bigr)(-c_{m_{t}}(t{-}1))
   &&\quad& (\text{by \eqref{eq:s-evolution}})
   \\[-1ex]
   & \qquad    
 +\sum_{i\neq m_{t}} s_{i}(t{-}1) 
   \bigl(c_{i}(t{-}1)+Q(t{-}1)_{m_{t},i}c_{m_{t}}(t{-}1)\bigr) 
   &&& (\text{by \eqref{eq:c-vec-change}})
   \\[-1ex]
   & = \bigl(-k_{t}-s_{m_{t}}(t{-}1)\bigr)(-c_{m_{t}}(t{-}1))
   +\sum_{i\neq m_{t}} s_{i}(t{-}1)
   \bigl(c_{i}(t{-}1)\bigr)
   \\[-1ex]
   & = +k_{t} c_{m_{t}}(t{-}1)+\sum_{i} s_{i}(t{-}1)c_{i}(t{-}1)
   \\[-.5ex]
   & =\psi(t{-}1) -k_{t} \varepsilon_{t}c_{m_{t}}(t{-}1)
   \\[.5ex]
   & = \psi(t{-}1)-k_{t} \alpha_{t}.
  \end{align*}
\end{proof}

Therefore $\N^{n}$-grading of the partition $q$-series expresses the
total change of the state vector around the mutation loop:
\begin{cor}
 \label{prop:psi-in-out} The state vectors of the initial and the final
 quivers are related as
 \begin{equation*}
    \label{eq:psi-in-out}
     \psi(0)-\psi(T)=\sum_{t=1}^{T}k_{t}\alpha_{t}.
 \end{equation*}
\end{cor}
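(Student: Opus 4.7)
The plan is to apply Proposition \ref{prop:psi-change} telescopically. Proposition \ref{prop:psi-change} supplies the one-step increment
\begin{equation*}
\psi(t-1) - \psi(t) = k_{t}\alpha_{t}, \qquad t = 1,\dots,T,
\end{equation*}
so summing over $t$ gives a telescoping sum on the left and exactly the desired $\N^{n}$-grading
\begin{equation*}
\psi(0) - \psi(T) = \sum_{t=1}^{T}\bigl(\psi(t-1)-\psi(t)\bigr) = \sum_{t=1}^{T} k_{t}\alpha_{t}
\end{equation*}
on the right. Equivalently one can argue by induction on $T$, with base case $T=0$ trivial and the inductive step given by a single application of \eqref{eq:psi-change}.

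There is no real obstacle here beyond correctly invoking Proposition \ref{prop:psi-change}, whose proof (the case analysis over green/red mutations using \eqref{eq:s-evolution} and \eqref{eq:c-vec-change}) has already been done. The only minor care is a sign bookkeeping: \eqref{eq:psi-change} is written as $\psi(t) = \psi(t-1) - k_{t}\alpha_{t}$, so summing the \emph{differences} $\psi(t-1) - \psi(t)$ yields $\psi(0) - \psi(T)$ rather than $\psi(T) - \psi(0)$. It is worth noting that the corollary does not yet invoke the boundary condition $\varphi$; it holds for any mutation sequence, and its significance is conceptual: it shows that the exponent $\sum_{t}k_{t}\alpha_{t}$ that grades $\ZZ(\gamma)$ in \eqref{eq:zz-def} has the intrinsic meaning of the net change of the state vector $\psi$ along the sequence. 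This identification is exactly what will later let one match the grading of the partition $q$-series with that of the product of quantum dilogarithms.
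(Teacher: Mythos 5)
Your proof is correct and matches the paper's (implicit) argument: the paper states this as an immediate corollary of Proposition \ref{prop:psi-change}, and the intended justification is exactly the telescoping sum you write down. The sign bookkeeping and the remark that no boundary condition is needed are both accurate.
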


\begin{lemma}
 \label{prop:rev-end} Let $\gamma=(Q;\seq{m},\varphi)$ be a reddening
 mutation loop. Then,
 \begin{enumerate}[(i)]
  \item The state vectors $\{\psi(t)\}_{t=0}^{T}$ are anti-periodic
	along the loop, that is,
	\begin{equation*}
	 \label{eq:psi-anti-periodic}
	 \psi(T)=-\psi(0).
	\end{equation*}

 \item The mutation loop $\gamma$ is non-degenerate. In particular, the
       initial $s$-variables are
       expressed as
       \begin{equation*}
	 \seq{s}(0):=(s_{1}(0),\dots,s_{n}(0))
	  =\frac{1}{2}\sum_{t=1}^{T}k_{t}\alpha_{t}.
       \end{equation*}
 \end{enumerate}
\end{lemma}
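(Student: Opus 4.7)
The plan is to assemble the conclusion from two ingredients already in place: the terminal form of the $c$-vectors guaranteed by Theorem \ref{thm:reddening=rev-end}, and the telescoping identity $\psi(0)-\psi(T)=\sum_{t=1}^{T} k_t\alpha_t$ of Corollary \ref{prop:psi-in-out}.

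For part (i), I would substitute directly into the definition $\psi(T)=\sum_i s_i(T)c_i(T)$. Theorem \ref{thm:reddening=rev-end} supplies $c_i(T)=-e_{\varphi(i)}$, while the boundary-condition rule (iv) of Section \ref{sec:Zdef} identifies $s_i(T)=s_{\varphi(i)}(0)$. Reindexing the sum by the bijection $\varphi$ of $\{1,\dots,n\}$, and using $c_j(0)=e_j$ for the initial quiver, collapses the expression to $-\sum_j s_j(0)e_j=-\psi(0)$, proving the anti-periodicity.

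Part (ii) then follows at once. Plugging $\psi(T)=-\psi(0)$ into Corollary \ref{prop:psi-in-out} yields $2\psi(0)=\sum_t k_t\alpha_t$, and the claimed formula for $\seq{s}(0)$ is obtained by reading off the coefficients in the basis $\{e_i\}=\{c_i(0)\}$ of $\Z^n$. Non-degeneracy then comes for free: the initial $s$-variables are now explicit $\Q$-linear combinations of the $k_t$, and the recursion \eqref{eq:s-evolution} propagates this expressibility to every $s_i(t)$ along the mutation sequence, so the whole system can be solved for $\seq{s}$ in terms of $\seq{k}$.

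The whole argument is essentially bookkeeping; there is no genuine obstacle once Theorem \ref{thm:reddening=rev-end} is in hand. The only subtle point to watch out for is keeping the two occurrences of $\varphi$ straight — once as the boundary identification of $s$-variables around the loop, once as the combinatorial datum encoded in the terminal $c$-vectors — since it is precisely the cancellation between these two appearances of $\varphi$ that produces the minus sign in the anti-periodicity, and hence the factor $\tfrac{1}{2}$ in the formula for $\seq{s}(0)$.
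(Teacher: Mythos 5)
Your argument is correct and follows the paper's own proof essentially step for step: part (i) by combining $s_i(T)=s_{\varphi(i)}(0)$ with $c_i(T)=-e_{\varphi(i)}$ from Theorem \ref{thm:reddening=rev-end} and reindexing by $\varphi$, and part (ii) by feeding the anti-periodicity into Corollary \ref{prop:psi-in-out} and propagating via \eqref{eq:s-evolution}. No discrepancies to report.
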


\begin{proof}
 By the boundary condition $\varphi$, the initial and the final
 $s$-variables are identified as $s_{i}(T)=s_{\varphi(i)}(0)$.  By (ii)
 of Theorem \ref{thm:reddening=rev-end}, we have also
 $c_{i}(T)=-c_{\varphi(i)}(0)$. Therefore
 \begin{equation*}
  \psi(T)=\sum_{i=1}^{n} s_{i}(T)c_{i}(T) =- \sum_{i=1}^{n}
   s_{\varphi(i)}(0)c_{\varphi(i)}(0) =- \sum_{j=1}^{n}
   s_{j}(0)c_{j}(0)= -\psi(0).
 \end{equation*}
 This proves (i). We have then
 \begin{align*}
  \seq{s}(0)&=\sum_{i=1}^{n} s_{i}(0)e_{i}
  =\sum_{i=1}^{n} s_{i}(0)c_{i}(0) =\psi(0)
  =\frac{1}{2}(\psi(0)-\psi(T)) 
  =\frac{1}{2}\sum_{t=1}^{T}k_{t}\alpha_{t},
 \end{align*}
 where the last equality is by Corollary \ref{prop:psi-in-out}. Thus the
 initial $s$-variables are expressed in terms of $k$-variables
 alone. We can obtain similar formulas for the remaining 
 $s$-variables by recursive use of the relation \eqref{eq:s-evolution}. This
 proves (ii).
\end{proof}

The following relation will play a key role in the proof of Theorem
\ref{thm:main}.
\begin{prop}
 \label{prop:kkv-k2-rel} For any mutation sequence, we have
 \begin{equation}
  \label{eq:kkv-k2-rel}
   \rule[-6pt]{0pt}{20pt}
   \sum_{t=1}^{T}\varepsilon_{t}k_{t}k_{t}^{\vee}
   +\vev{\psi(0), \psi(T)} 
   =  \sum_{t=1}^{T}\varepsilon_{t}k_{t}^{2}  
   - \sum_{1\leq i< j\leq
   T}k_{i}k_{j}\vev{\alpha_{i},\alpha_{j}}.
 \end{equation}
\end{prop}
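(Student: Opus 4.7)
The plan is to reduce the identity to a direct manipulation using the three key ingredients already developed: the local relation \eqref{eq:k-kv-rel} expressing $k_t^\vee$ via $k_t$ and the current $s$-variables, Proposition \ref{prop:CBC=B} which intertwines the $B$-matrix with the bilinear form $\vev{\cdot,\cdot}$ and $c$-vectors, and Proposition \ref{prop:psi-change} which tells us how the state vector evolves. The strategy is to peel off $\varepsilon_t k_t k_t^\vee$ one factor at a time, turn the $B$-sum into an inner product with $\psi(t{-}1)$, and then telescope back to the boundary data.

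Concretely, the first step is to substitute $k_t^\vee = k_t - \sum_i B(t{-}1)_{m_t,i}\, s_i(t{-}1)$ (from \eqref{eq:k-kv-rel}) into the left-hand side, giving
\begin{equation*}
\varepsilon_t k_t k_t^\vee = \varepsilon_t k_t^2 - \varepsilon_t k_t \sum_i B(t{-}1)_{m_t,i}\, s_i(t{-}1).
\end{equation*}
By Proposition \ref{prop:CBC=B} applied at time $t{-}1$, $B(t{-}1)_{m_t,i} = \vev{c_{m_t}(t{-}1), c_i(t{-}1)}$, so the sum becomes $\vev{c_{m_t}(t{-}1),\psi(t{-}1)}$ by the very definition \eqref{eq:psi-def} of $\psi$. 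Using $\alpha_t = \varepsilon_t c_{m_t}(t{-}1)$ and $\varepsilon_t^2 = 1$, we obtain the clean identity
\begin{equation*}
\varepsilon_t k_t k_t^\vee = \varepsilon_t k_t^2 - k_t \vev{\alpha_t, \psi(t{-}1)}.
\end{equation*}

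The second step is to expand $\psi(t{-}1)$ using iterated application of Proposition \ref{prop:psi-change}: $\psi(t{-}1) = \psi(0) - \sum_{s<t} k_s \alpha_s$. Substituting and summing over $t$, the $k_t\vev{\alpha_t,\psi(0)}$ contribution assembles, via Corollary \ref{prop:psi-in-out}, into $\vev{\psi(0){-}\psi(T),\psi(0)} = -\vev{\psi(T),\psi(0)} = \vev{\psi(0),\psi(T)}$; meanwhile the double sum $\sum_{s<t} k_s k_t \vev{\alpha_t,\alpha_s}$ flips sign by skew-symmetry of $\vev{\cdot,\cdot}$ to become $-\sum_{i<j} k_i k_j \vev{\alpha_i,\alpha_j}$. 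Collecting everything yields
\begin{equation*}
\sum_{t=1}^T \varepsilon_t k_t k_t^\vee = \sum_{t=1}^T \varepsilon_t k_t^2 - \vev{\psi(0),\psi(T)} - \sum_{i<j} k_i k_j \vev{\alpha_i,\alpha_j},
\end{equation*}
which is exactly \eqref{eq:kkv-k2-rel} after moving $\vev{\psi(0),\psi(T)}$ to the left.

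I do not expect any serious obstacle; the only delicacy is sign bookkeeping, but the uniform convention $\alpha_t = \varepsilon_t c_{m_t}(t{-}1)$ absorbs the green/red case distinction into a single factor that squares away. The cleanest narrative is to first establish the per-step identity $\varepsilon_t k_t k_t^\vee = \varepsilon_t k_t^2 - k_t\vev{\alpha_t,\psi(t{-}1)}$ as a lemma-like intermediate, and then perform the telescoping as a separate paragraph so the role of Corollary \ref{prop:psi-in-out} and of the anti-symmetry of $\vev{\cdot,\cdot}$ is visible at a glance.
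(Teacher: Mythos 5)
Your proposal is correct and follows essentially the same route as the paper's proof: substitute \eqref{eq:k-kv-rel}, convert the $B(t{-}1)$ entries into the bilinear form via Proposition \ref{prop:CBC=B}, recognize $\psi(t{-}1)$, telescope with Proposition \ref{prop:psi-change} and Corollary \ref{prop:psi-in-out}, and finish with skew-symmetry. The only differences are cosmetic (you use the other sign form of \eqref{eq:k-kv-rel} and write the pairings in the opposite order), and your sign bookkeeping checks out.
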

\begin{proof}
 \begin{align*}
   & \sum_{t=1}^{T}\varepsilon_{t}k_{t}k_{t}^{\vee}-
   \sum_{t=1}^{T}\varepsilon_{t}k_{t}^{2}  
   =  \sum_{t=1}^{T}\varepsilon_{t}k_{t}
   (k_{t}^{\vee}-k_{t})
   \\
   &=  \sum_{t=1}^{T}\varepsilon_{t}k_{t}
   \sum_{i=1}^{n} B(t{-}1)_{i,m_{t}} s_{i}(t{-}1)
   & (\text{by \eqref{eq:k-kv-rel}})
   \\
   &=  \sum_{t=1}^{T}\varepsilon_{t}k_{t}
   \sum_{i=1}^{n} \vev{c_{i}(t{-}1),c_{m_{t}}(t{-}1)} s_{i}(t{-}1)
   & (\text{by \eqref{eq:Bcc-t}})
   \\
   &=  \sum_{t=1}^{T}k_{t}
   \sum_{i=1}^{n} \vev{c_{i}(t{-}1),\alpha_{t}} s_{i}(t{-}1)
      & (\text{by \eqref{eq:alpha-def}})
   \\
   &=  \sum_{t=1}^{T}k_{t}
    \vev{ \psi(t{-}1),\alpha_{t}}
   & (\text{by \eqref{eq:psi-def}})
   \\
   &=  \sum_{t=1}^{T}k_{t}
    \vev{ \psi(0)-\sum_{i=1}^{t-1}k_{i}\alpha_{i},\alpha_{t}}
   & (\text{by \eqref{eq:psi-change}})
   \\
   &=  \sum_{t=1}^{T}k_{t}
    \vev{ \psi(0),\alpha_{t}}
   - \sum_{t=1}^{T}\sum_{i=1}^{t-1} k_{i} k_{t}
   \vev{ \alpha_{i},\alpha_{t}}
   \\
   &=  
    \bvev{ \psi(0),\sum_{t=1}^{T}k_{t}\alpha_{t}}
   - \sum_{j=1}^{T}\sum_{i=1}^{j-1} k_{i} k_{j}
   \vev{ \alpha_{i},\alpha_{j}}
   \\
   &=  
    \vev{ \psi(0),\psi(0)-\psi(T)}
   - \sum_{1\leq i< j\leq T}k_{i} k_{j}
   \vev{ \alpha_{i},\alpha_{j}}
   & (\text{by Corollary \ref{prop:psi-in-out}})
   \\
   &=  
   - \vev{ \psi(0),\psi(T)}
   - \sum_{1\leq i< j\leq T}k_{i} k_{j}
   \vev{ \alpha_{i},\alpha_{j}}.
   & (\text{by skewness of $\vev{~,~}$})
 \end{align*}
 By arranging the terms, we obtain \eqref{eq:kkv-k2-rel}.
\end{proof}

\subsection{Proof of Theorem \ref{thm:main}}

We are now ready to prove Theorem \ref{thm:main}.  
The partition $q$-series associated with the loop $\gamma$ is defined to be
\begin{equation}
 \label{eq:zz-reverse}
  \begin{split}
  \ZZ(\gamma) &=   \sum_{\seq{k}\in \N^{T}} \prod_{t=1}^{T}W(m_{t})  \;
   y^{\sum_{t=1}^{T}k_{t}\alpha_{t}}
   =
   \sum_{\seq{k}\in \N^{T}} 
  \frac{q^{\frac{1}{2}\sum_{t=1}^{T}\varepsilon_{t}k_{t}k_{t}^{\vee}}}
   {\prod_{t=1}^{T} (q^{\varepsilon_{t}})_{k_{t}}}\,
   y^{\sum_{t=1}^{T}k_{t}\alpha_{t}}.
  \end{split}
\end{equation}
On the other hand, the quantum dilogarithm product along $\seq{m}$ is
given by
\begin{equation*}
 % \label{eq:ee-reversing}
 \begin{split}
  \EE(Q;\seq{m}) &=
  \EE(m_{1};q^{\varepsilon_{1}})\EE(m_{2};q^{\varepsilon_{2}})
  \cdots \EE(m_{T};q^{\varepsilon_{T}})
  \\
  &=
  \biggl(
  \sum_{k_{1}=0}^{\infty}
  \frac{q^{-\frac{\varepsilon_{1}}{2}k_{1}^{2}}}
  {(q^{-\varepsilon_{1}})_{k_{1}}}
  y^{k_{1}\alpha_{1}}\biggr)
  \cdots
  \biggl(
  \sum_{k_{T}=0}^{\infty}
  \frac{q^{-\frac{\varepsilon_{T}}{2}k_{T}^{2}}}
  {(q^{-\varepsilon_{T}})_{k_{T}}}
  y^{k_{T}\alpha_{T}}\biggr)
  \\
  &
  =
  \sum_{\seq{k}\in \N^{T}}
  \frac{q^{-\frac{1}{2}\sum_{t=1}^{T}\varepsilon_{t}k_{t}^{2}}}
  {\prod_{t=1}^{T} (q^{-\varepsilon_{t}})_{k_{t}}}\,
  y^{k_{1}\alpha_{1}}
  \cdots
  y^{k_{T}\alpha_{T}}
  \\
  &
  =
  \sum_{\seq{k}\in \N^{T}}
  \frac{q^{-\frac{1}{2}\sum_{t=1}^{T}\varepsilon_{t}k_{t}^{2}}}
  {\prod_{t=1}^{T} (q^{-\varepsilon_{t}})_{k_{t}}}
  q^{\frac{1}{2}\sum_{1\leq i< j\leq T}k_{i}k_{j}
  \vev{\alpha_{i},\alpha_{j}}}\,
  y^{\sum_{t=1}^{T}k_{t}\alpha_{t}}.
 \end{split}
\end{equation*}
Therefore,
\begin{equation}
 \label{eq:E-bar-rev} \overline{\strut \EE(Q;\seq{m})} =
 \sum_{\seq{k}\in \N^{T}}
 \frac{q^{\frac{1}{2}\sum_{t=1}^{T}\varepsilon_{t}k_{t}^{2}}}{\prod_{t=1}^{T}
 (q^{\varepsilon_{t}})_{k_{t}}} q^{-\frac{1}{2} \sum_{1\leq i< j\leq
 T}k_{i}k_{j}\vev{\alpha_{i},\alpha_{j}}}\,
 y^{\sum_{t=1}^{T}k_{t}\alpha_{t}}.
\end{equation}
 Thus, all we have to show is that the exponents of $q$ in the summands
 of \eqref{eq:zz-reverse} and \eqref{eq:E-bar-rev} are equal for every
 $\seq{k}$:
 \begin{equation}
  \label{eq:zz-ee}
  \sum_{t=1}^{T}\varepsilon_{t}k_{t}k_{t}^{\vee}
   =  \sum_{t=1}^{T}\varepsilon_{t}k_{t}^{2}  
   - \sum_{1\leq i< j\leq T}k_{i}k_{j}\vev{\alpha_{i},\alpha_{j}}.
\end{equation}
Indeed, by Lemma \ref{prop:rev-end} (i), we have $\psi(T)=-\psi(0)$,
which implies $ \vev{\psi(0),\psi(T)}=-\vev{\psi(0),\psi(0)}=0$ by the
skewness of $\vev{~,~}$. Then \eqref{eq:zz-ee} follows from Proposition
\ref{prop:kkv-k2-rel}. This completes the proof of Theorem
\ref{thm:main}.

\section{Examples}
\label{sec:Examples}

In this section, we collect various examples of the reddening mutation
loops and the associated partition $q$-series to illustrate Theorem
\ref{thm:main}.

\subsection{$A_{2}^{(1)}$-quiver}

As a simplest example of quiver with an oriented cycle, let us take the
$A_{2}^{(1)}$ quiver
\begin{equation*}
 Q=\quad\vcenter{\xybox{ \xygraph{!{<0cm,0cm>;<7mm,0cm>:<0cm,7mm>::}
 !{a(90)}*+{1}="1" !{a(210) }*+{2}="2" !{a(330) }*+{3}="3" "1":"2"
 "2":"3" "3":"1" }}}.
\end{equation*}
By performing successive mutations on $\widetilde{Q}(0):=Q^{\vee}$
(Figure \ref{fig:A2(1)-loop}), it is easy to see that the mutation
sequence
\begin{equation*}
 \seq{m}=(1,2,3,1)
\end{equation*}
is maximal green, reddening sequence with the boundary condition 
\begin{equation}
 \label{eq:A2(1)-bc}
 (13)=(1\mapsto 3, ~2\mapsto 2, ~3\mapsto 1)\in S_{3}.
\end{equation}
From Figure \ref{fig:A2(1)-loop} we can read off  the $c$-vectors of the
mutating vertices:
\begin{align*}
\alpha _1  = c_{1}(0)=   (1,0,0), &&&
\alpha _2  = c_{2}(1)=   (0,1,0), \\
\alpha _3  = c_{3}(2)=   (1,0,1), &&&
\alpha _4  = c_{1}(3)=   (0,0,1). 
\end{align*}
The $s$-variables change as follows (cf. \eqref{eq:s-evolution}):
\begin{equation*}
 \renewcommand{\arraystretch}{1.1}
 \begin{array}{c|c|c|c}
  & 1 & 2 & 3 \\
  \hline
 Q(0) & s_1 & s_2 & s_3 \\
 Q(1) & s_{1}'=k_{1}-s_1+s_3 & s_2 & s_3 \\
 Q(2) & s_{1}'=k_1-s_1+s_3 & s_{2}'=k_2-s_2 & s_3 \\
 Q(3) & s_{1}'=k_1-s_1+s_3 & s_{2}'=k_2-s_2 & s_{3}'=k_3-s_{3}+s_{1}' \\
 Q(4) & s_{1}''=k_4-s_{1}'+s_{3}' & s_{2}'=k_2-s_2 
  & s_{3}'=k_3-s_{3}+s_{1}'
\end{array}
\end{equation*}
The boundary condition \eqref{eq:A2(1)-bc} imposes
\begin{equation*}
 s_{1}''=s_3,\qquad s_{2}'=s_2,\qquad
  s_{3}'=s_1.
\end{equation*}
From these relations, we can express $s$-variables in terms of
$k$-variables:
\begin{equation*}
 s_1 = s_{3}'= \frac{1}{2} \left(k_1+k_3\right),\quad
  s_{1}'=\frac{1}{2}(k_{1}+k_{4}),\quad 
 s_2 = s_{2}'= \frac{k_2}{2},\quad 
 s_3 = s_{1}''= \frac{1}{2} \left(k_3+k_4\right).
\end{equation*}
The $k^{\vee}$-variables are then
\begin{equation*}
\renewcommand{\arraystretch}{1.2}
\arraycolsep=2pt
\begin{array}{rclcl}
 k_{1}^{\vee} &=& s_{1}+s_{1}'-s_{2} &=&
 k_1-\frac{k_2}{2}+\frac{k_3}{2}+\frac{k_4}{2},
 \\
 k_{2}^{\vee} &=&  s_{2}+s_{2}'-s_{1}' &=&
 -\frac{k_1}{2}+k_2-\frac{k_4}{2},
 \\
 k_{3}^{\vee} &=&  s_{3}+s_{3}' &=&
 \frac{k_1}{2}+k_3+\frac{k_4}{2},
 \\
 k_{4}^{\vee} &=&  s_{1}'+s_{1}''-s_{2}' &=&
 \frac{k_1}{2}-\frac{k_2}{2}+\frac{k_3}{2}+k_4.
\end{array}
\end{equation*}
Plugging these into the definition of mutation weights \eqref{eq:W-def}
and summing over $k$-variables, we obtain
\begin{equation}
 \label{eq:Z-A2(1)}
 \ZZ(\gamma)=\sum_{\seq{k}\in \N^{4}}
  \frac{q^{\frac{1}{2}(k_1^2+k_2^2+k_3^2+k_4^2-k_1 k_2 + k_1 k_3
  + k_1 k_4 -k_2 k_4+k_3 k_4)}}
  {(q)_{k_{1}}(q)_{k_{2}}(q)_{k_{3}}(q)_{k_{4}}} y^{(k_1+k_3,k_2,k_3+k_4)}.
\end{equation}

\begin{figure}[t]
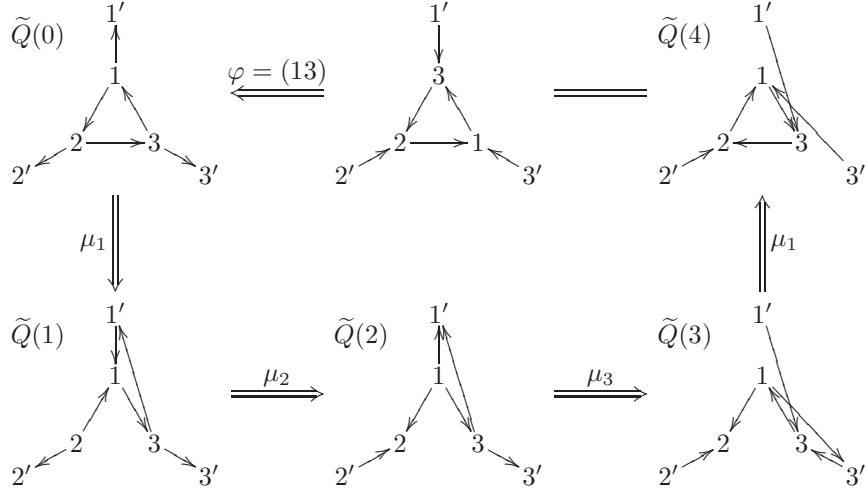

\begin{equation*}
 \xygraph{!{<0cm,0cm>;<43mm,0cm>:<0cm,40mm>::}
  %%%%%%%%%%%%%%%%%%%%%%%%%%%%%%%%%%%%%%%%%%%%%%%%%%%%%%
  !{(0,1)-(0.24,-0.2)}*+{\widetilde{Q}(0)}
  !{(0,1)}*+{ 
  \vcenter{\xybox{
  \xygraph{!{<0cm,0cm>;<6mm,0cm>:<0cm,6mm>::}
  !{a(90)}*+<2.8pt>{1}="1" 
  !{a(210) }*+<2.8pt>{2}="2" 
  !{a(330) }*+<2.8pt>{3}="3" 
  !{(0,0);a(90)**{}?(2.4)}*+<2.8pt>{1'}="1'" 
  !{(0,0);a(210)**{}?(2.4)}*+<2.8pt>{2'}="2'" 
  !{(0,0);a(330)**{}?(2.4)}*+<2.8pt>{3'}="3'" 
  "1":"2" 
  "2":"3" 
  "3":"1" 
  "1":"1'" 
  "2":"2'" 
  "3":"3'" 
  }}}}="q0" 
  %%%%%%%%%%%%%%%%%%%%%%%%%%%%%%%%%%%%%%%%%%%%%%%%%%%%%%
  !{(0,0)-(0.24,-0.2)}*+{\widetilde{Q}(1)}
  !{(0,0)}*+{
  \vcenter{\xybox{
 \xygraph{!{<0cm,0cm>;<6mm,0cm>:<0cm,6mm>::}
  !{a(90)}*+<2.8pt>{1}="1" 
  !{a(210) }*+<2.8pt>{2}="2" 
  !{a(330) }*+<2.8pt>{3}="3" 
  !{(0,0);a(90)**{}?(2.4)}*+<2.8pt>{1'}="1'" 
  !{(0,0);a(210)**{}?(2.4)}*+<2.8pt>{2'}="2'" 
  !{(0,0);a(330)**{}?(2.4)}*+<2.8pt>{3'}="3'" 
  "1":"3" 
  "2":"1" 
  "2":"2'" 
  "3":"1'" 
  "3":"3'" 
  "1'":"1" 
  }}}}="q1" 
  %%%%%%%%%%%%%%%%%%%%%%%%%%%%%%%%%%%%%%%%%%%%%%%%%%%%%%
  !{(1,0)-(0.24,-0.2)}*+{\widetilde{Q}(2)}
  !{(1,0)}*+{
  \vcenter{\xybox{
 \xygraph{!{<0cm,0cm>;<6mm,0cm>:<0cm,6mm>::}
  !{a(90)}*+<2.8pt>{1}="1" 
  !{a(210) }*+<2.8pt>{2}="2" 
  !{a(330) }*+<2.8pt>{3}="3" 
  !{(0,0);a(90)**{}?(2.4)}*+<2.8pt>{1'}="1'" 
  !{(0,0);a(210)**{}?(2.4)}*+<2.8pt>{2'}="2'" 
  !{(0,0);a(330)**{}?(2.4)}*+<2.8pt>{3'}="3'" 
  "1":"2" 
  "1":"3" 
  "3":"3'" 
  "3":"1'" 
  "1":"1'" 
  "2'":"2" 
  }}}}="q2" 
  %%%%%%%%%%%%%%%%%%%%%%%%%%%%%%%%%%%%%%%%%%%%%%%%%%%%%%
  !{(2,0)-(0.24,-0.2)}*+{\widetilde{Q}(3)}
  !{(2,0)}*+{
 \vcenter{\xybox{
 \xygraph{!{<0cm,0cm>;<6mm,0cm>:<0cm,6mm>::}
  !{a(90)}*+<2.8pt>{1}="1" 
  !{a(210) }*+<2.8pt>{2}="2" 
  !{a(330) }*+<2.8pt>{3}="3" 
  !{(0,0);a(90)**{}?(2.4)}*+<2.8pt>{1'}="1'" 
  !{(0,0);a(210)**{}?(2.4)}*+<2.8pt>{2'}="2'" 
  !{(0,0);a(330)**{}?(2.4)}*+<2.8pt>{3'}="3'" 
  "1":"2" 
  "1":"3'" 
  "3":"1" 
  "1'":"3" 
  "2'":"2" 
  "3'":"3" 
}}} 
}="q3" 
  %%%%%%%%%%%%%%%%%%%%%%%%%%%%%%%%%%%%%%%%%%%%%%%%%%%%%%
!{(2,1)-(0.24,-0.2)}*+{\widetilde{Q}(4)}
!{(2,1)}*+{ 
 \vcenter{\xybox{
 \xygraph{!{<0cm,0cm>;<6mm,0cm>:<0cm,6mm>::}
  !{a(90)}*+<2.8pt>{1}="1" 
  !{a(210) }*+<2.8pt>{2}="2" 
  !{a(330) }*+<2.8pt>{3}="3" 
  !{(0,0);a(90)**{}?(2.4)}*+<2.8pt>{1'}="1'" 
  !{(0,0);a(210)**{}?(2.4)}*+<2.8pt>{2'}="2'" 
  !{(0,0);a(330)**{}?(2.4)}*+<2.8pt>{3'}="3'" 
  "1":"3" 
  "2":"1" 
  "3":"2" 
  "1'":"3" 
  "2'":"2" 
  "3'":"1" 
}}}
}="q4" 
  %%%%%%%%%%%%%%%%%%%%%%%%%%%%%%%%%%%%%%%%%%%%%%%%%%%%%%
!{(1,1)}*+{ 
 \vcenter{\xybox{
 \xygraph{!{<0cm,0cm>;<6mm,0cm>:<0cm,6mm>::}
  !{a(90)}*+<2.8pt>{3}="3" 
  !{a(210) }*+<2.8pt>{2}="2" 
  !{a(330) }*+<2.8pt>{1}="1" 
  !{(0,0);a(90)**{}?(2.4)}*+<2.8pt>{1'}="1'" 
  !{(0,0);a(210)**{}?(2.4)}*+<2.8pt>{2'}="2'" 
  !{(0,0);a(330)**{}?(2.4)}*+<2.8pt>{3'}="3'" 
  "1":"3" 
  "2":"1" 
  "3":"2" 
  "1'":"3" 
  "2'":"2" 
  "3'":"1" 
}}}}="q4a" 
  %%%%%%%%%%%%%%%%%%%%%%%%%%%%%%%%%%%%%%%%%%%%%%%%%%%%%%
"q0":@{=>}_{\textstyle \mu_{1}}"q1"
"q1":@{=>}^{\textstyle \mu_{2}}"q2"
"q2":@{=>}^{\textstyle \mu_{3}}"q3"
"q3":@{=>}_{\textstyle \mu_{1}}"q4"
"q4":@{=}"q4a"
"q4a":@{=>}_{\textstyle \varphi=(13)}"q0"
}
\end{equation*}
\caption{Reddening mutation loop for $A_{2}^{(1)}$-quiver.} 
 \label{fig:A2(1)-loop}
\end{figure}

\subsection{Square product $A_{3} \square A_{2}$}

As an example of the quivers of square product type (see \cite{KT2014}
for definition), consider
\begin{equation}
 \label{eq:A3xA2-quiver}
 Q=A_{3} \square A_{2}=\vcenter{
 \xymatrix @R=6mm @C=6mm @M=4pt{
 1 \ar[r] & 3  \ar[d] & 5\ar[l]\\
 2  \ar[u]& 4 \ar[l] \ar[r]& 6 \ar[u]}
}.
\end{equation}
One can check that 
\begin{equation}
 \label{eq:A3xA2-revend}
 \seq{m}=(1, 4, 5, 2, 3, 6, 1, 4, 5) 
\end{equation}
is a reddening sequence with the boundary condition
\begin{equation}
\varphi= (12)(34)(56)=
\setlength\arraycolsep{3pt}
\left(
\begin{array}{cccccc}
 1 & 2 & 3 & 4 & 5 & 6 \\
 2 & 1 & 4 & 3 & 6 & 5 \\
\end{array}
\right)\in S_{6}.
\end{equation}
Let $\seq{k}=(k_{1},\dots,k_{9})$ be the $k$-variables corresponding to
mutation sequence \eqref{eq:A3xA2-revend}. The evolution of
$s$-variables along the mutation loop is summarized as follows:
\begin{equation*}
\begin{array}{rclclccc}
 s_1 & \mapsto & s'_1 = k_1-s_1+s_2 & \mapsto & s''_1 =k_7-s'_1+s'_2 &
  = & s_{2}\\
 s_2 & \mapsto & s'_2 = k_4-s_2+s'_1 & && = & s_{1}\\
 s_3 & \mapsto & s'_3 = k_5-s_3+s'_4  & && = & s_{4}\\
 s_4 & \mapsto & s'_4 = k_2-s_4+ s_3 & \mapsto & s''_4 = k_8-s'_4+s'_3 &
  = & s_{3} \\
 s_5 & \mapsto & s'_5 = k_3-s_5+s_6 & \mapsto & s''_5 = k_9-s'_5+s'_6
  & = & s_{6}\\
 s_6 & \mapsto & s'_6 = k_6-s_6+s'_5 & && = & s_{5}\\\\
\end{array}
\end{equation*}
One can express all $s$-variables in terms of $k$-variables:
\begin{equation*}
\setlength\arraycolsep{1pt}
\begin{array}{rlrlrl}
 s_1 = s'_2 & =  \left(k_1+k_4\right)/2,&\qquad
 s_{2} = s''_1 & =   \left(k_4+k_7\right)/2,&\qquad
 s_3 = s''_{4} & =  \left(k_5+k_8\right)/2,\\
 s_{4}= s'_3 & =    \left(k_2+k_5\right)/2,&
 s_5 =s'_6& =   \left(k_3+k_6\right)/2,&
 s_6 =  s''_5& =   \left(k_6+k_9\right)/2,\\
 s'_1 & =   \left(k_1+k_7\right)/2,&
 s'_4 & =   \left(k_2+k_8\right)/2,&
 s'_5 & =   \left(k_3+k_9\right)/2,
\end{array}
\end{equation*}
The $c$-vectors of mutating vertices are
\begin{equation*}
\begin{split}
 &
 \alpha_{1}=(100000),\quad\alpha_{2}=(000100),\quad\alpha_{3}=(000010), 
 \\
 &
 \alpha_{4}=(110000),\quad\alpha_{5}=(001100),\quad\alpha_{6}=(000011),
 \\
 &
 \alpha_{7}=(010000),\quad\alpha_{8}=(001000),\quad\alpha_{9}=(000001).
\end{split}
\end{equation*}
We obtain the partition $q$-series 
\begin{equation*}
 \ZZ(\gamma)=\sum_{\seq{k}\in \N^{9}}
  \frac{q^{\frac{1}{4}\seq{k}^{\top}A\,\seq{k}}}{\prod_{t=1}^{9}(q)_{k_{t}}} 
  y^{\beta(\seq{k})}
\end{equation*}
where
\begin{equation*}
 \beta(\seq{k})=(k_1+k_4,k_4+k_7,k_5+k_8,k_2+k_5,k_3+k_6,k_6+k_9)
\end{equation*}
and $A$ is a symmetric  $9\times 9$ matrix given by
\begin{equation*}
 A=\left(
    \begin{array}{c|c|c}
     A' & A'' & A'' \\
     \hline
      \rule{0pt}{2.5ex} A''& A' &A'' \\
     \hline
      \rule{0pt}{2.5ex} A'' & A''& A' \\
    \end{array}
	       \right),
\qquad 
 A'=
 { \setlength\arraycolsep{4pt}
 \left(\begin{array}{rrr}
  2 & 0 & 0  \\
  0 & 2 & 0  \\
  0 & 0 & 2  \\
	\end{array}\right),}
\qquad 
 A''=
 { \setlength\arraycolsep{2pt}
 \left(\begin{array}{rrr}
  1 & -1 & -1  \\
	-1 & 1 & -1 \\
	-1 & -1 & 1 \\
	\end{array}\right).}
\end{equation*}
\begin{rem}
 The mutation loop \eqref{eq:A3xA2-revend} is different from the one
 considered in our previous work:
 \begin{equation*}
  \gamma' =(Q,\seq{m}',\id)\qquad \seq{m}'=(1, 4, 5, 2, 3, 6).
 \end{equation*}
 (See Theorem 6.1 of \cite{KT2014} and the example therein.) Although
 $\mu_{\seq{m}'}(Q)$ is isomorphic to $Q$, $\seq{m}'$ is not a reddening
 sequence. The sequence $\seq{m}$ contains $\seq{m}'$ as a proper
 subsequence.
\end{rem}

\subsection{Octahedral quiver}

Here is another example of non-alternating quiver --- the octahedral
quiver:
\begin{equation*}
 Q=\quad\vcenter{\xybox{
 \xygraph{!{<0cm,0cm>;<15mm,0cm>:<0cm,15mm>::}
  !{a(0)}*+{1}="1" 
  !{a(60) }*+{2}="2" 
  !{a(120) }*+{3}="3" 
  !{a(180) }*+{4}="4" 
  !{a(240) }*+{5}="5" 
  !{a(300) }*+{6}="6" 
  "1":"2" 
  "2":"3" 
  "3":"4" 
  "4":"5" 
  "5":"6" 
  "6":"1" 
  "3":"1" 
  "4":"2" 
  "5":"3" 
  "6":"4" 
  "1":"5" 
  "2":"6" 
}}}\quad.
\end{equation*}
The mutation sequence 
\begin{equation*}
 \seq{m}=(1, 2, 5, 6, 3, 4, 1, 2, 5, 6, 3, 4)
\end{equation*}
together with the boundary condition
\begin{equation*}
\setlength\arraycolsep{3pt}
 \varphi=
  \begin{pmatrix}
   1 & 2 & 3 & 4 & 5 & 6 \\
   1 & 5 & 6 & 4 & 2 & 3 \\
  \end{pmatrix}
 \in S_{6}.
\end{equation*}
form a reddening, maximal green mutation loop
$\gamma=(Q;\seq{m},\varphi)$ of length $T=12$. 
Indeed, the $c$-matrix of the final quiver $Q(T)$ is given by
\begin{equation*}
\setlength\arraycolsep{2.8pt}
\left(
\begin{array}{rrrrrr}
 -1 & 0 & 0 & 0 & 0 & 0 \\
 0 & 0 & 0 & 0 & -1 & 0 \\
 0 & 0 & 0 & 0 & 0 & -1 \\
 0 & 0 & 0 & -1 & 0 & 0 \\
 0 & -1 & 0 & 0 & 0 & 0 \\
 0 & 0 & -1 & 0 & 0 & 0 \\
\end{array}
\right)
\end{equation*}
which is a (negative of) permutation matrix corresponding to $\varphi$.

Let $\seq{k}=(k_{1},\dots,k_{12})$ be the $k$-variables corresponding to
mutation sequence $\seq{m}$. In this example, every vertex is mutated
twice.  The evolution of $s$-variables along the mutation loop is
summarized as follows:
\begin{equation*}
\renewcommand{\arraystretch}{1.16}
\begin{array}{lllllll}
s_{1} & \mapsto & s'_1=k_1-s_1+s_3+s_6 & \mapsto &
 s''_1=k_7-s'_1+s'_3+s'_6 &=& s_{1}
\\
s_{2} & \mapsto & s'_2=k_2-s_2+s_4 & \mapsto & s''_2=k_8-s'_2+s'_4
&=& s_{5}
\\
s_{3} & \mapsto & s'_3=k_5-s_3+s'_1  & \mapsto & s''_3=k_{11}-s'_3+s''_1 
&=& s_{6} \\
s_{4} & \mapsto & s'_4=k_6-s_4+s'_2+s'_5  & \mapsto &
 s''_4=k_{12}-s'_4 +s''_2+s''_5
&=& s_{4}
\\
s_{5} & \mapsto & s'_5=k_3-s_5+s_4  & \mapsto & s''_5=k_9-s'_5  +s'_4
&=& s_{2}
\\
s_{6} & \mapsto & s'_6=k_4-s_6+s'_1  & \mapsto & s''_6=k_{10}-s'_6+s''_1
&=& s_{3}
\end{array}
\end{equation*}
Solving these, we can express all $s$-variables in terms of $k$-variables:
\begin{align*}
 s_1 & =s''_{1} =  \left(k_1+k_4+k_5+k_7\right)/2, &
 s_2 & =s''_{5}  =   \left(k_2+k_6+k_9\right)/2, \\ 
 s_3 & =s''_{6}  =   \left(k_5+k_7+k_{10}\right)/2, &
 s_4 & =s''_{4} =   \left(k_6+k_8+k_9+k_{12}\right)/2, \\
 s_5 & =s''_{2}  =   \left(k_3+k_6+k_8\right)/2, &
 s_6 & =s''_{3} =   \left(k_4+k_7+k_{11}\right)/2,\\
 s'_1 & =   \left(k_1+k_7+k_{10}+k_{11}\right)/2, &
 s'_2 & =   \left(k_2+k_8+k_{12}\right)/2\\
 s'_3 & =   \left(k_1+k_5+k_{11}\right)/2, &
 s'_4 & =   \left(k_2+k_3+k_6+k_{12}\right)/2\\
 s'_5 & =   \left(k_3+k_9+k_{12}\right)/2, &
 s'_6 & =   \left(k_1+k_4+k_{10}\right)/2.\\
\end{align*}
The partition $q$-series is now given by
\begin{equation}
 \label{eq:Z-octahedron}
 \ZZ(\gamma)=\sum_{\seq{k}\in \N^{12}}
  \frac{q^{\frac{1}{4}\seq{k}^{\top}A\,\seq{k}}}{\prod_{i} (q)_{k_{i}}}
  y^{\beta(\seq{k})},
%  = \sum_{\alpha \in \N^{6}} \biggl(\sum_{\substack{\seq{k}\in \N^{12}:
% \beta(\seq{k})=\alpha}}   
%  \frac{q^{\frac{1}{4}\seq{k}^{\top}A\,\seq{k}}}{\prod_{i} (q)_{k_{i}}}
%  \biggr) y^{\alpha},
\end{equation}
where 
\begin{equation*}
\begin{split}
 \beta(\seq{k})=&
 \bigl(
k_1+k_4+k_5+k_7,k_2+k_6+k_9,k_5+k_7+k_{10},\\
 & \qquad k_6+k_8+k_9+k_{12},k_3+k_6+k_8,k_4+k_7+k_{11}\bigr)
 \in \N^{6}
\end{split}
\end{equation*}
and $A$ is the 
$12\times 12$ symmetric matrix of the following form:
\begin{equation*}
 A=\left(
\begin{array}{c|c}
 A'& A'' \\
 \hline
 \rule{0pt}{2.5ex} A''& A' \\
\end{array}
\right),
\end{equation*}
\begin{equation*}
\setlength\arraycolsep{2pt}
  A'=
\left(\begin{array}{rrrrrr}
 2 & -1 & -1 & 1 & 1 & -2 \\
 -1 & 2 & 0 & 0 & 0 & 1 \\
 -1 & 0 & 2 & 0 & 0 & 1 \\
 1 & 0 & 0 & 2 & 0 & -1 \\
 1 & 0 & 0 & 0 & 2 & -1 \\
 -2 & 1 & 1 & -1 & -1 & 2 \\
\end{array}\right),\quad 
  A''=
\left(\begin{array}{rrrrrr}
 2 & -1 & -1 & 1 & 1 & 0 \\
 -1 & 1 & 1 & -1 & -1 & 1 \\
 -1 & 1 & 1 & -1 & -1 & 1 \\
 1 & -1 & -1 & 1 & 1 & -1 \\
 1 & -1 & -1 & 1 & 1 & -1 \\
 0 & 1 & 1 & -1 & -1 & 2 \\
\end{array}\right).
\end{equation*}
By Theorem \ref{thm:main}, the partition $q$-series
\eqref{eq:Z-octahedron} is equal
to $\overline{\EE(Q;\seq{m})}$, where 
\begin{equation*}
\begin{split}
\EE(Q;\seq{m})&= 
 \EE(y^{(1 0 0 0 0 0)})
 \EE(y^{(0 1 0 0 0 0)})
 \EE(y^{(0 0 0 0 1 0)})
 \EE(y^{(1 0 0 0 0 1)})
 \EE(y^{(1 0 1 0 0 0)})
 \EE(y^{(0 1 0 1 1 0)}) 
 \\
 &\times 
 \EE(y^{(1 0 1 0 0 1)}) 
 \EE(y^{(0 0 0 1 1 0)}) 
 \EE(y^{(0 1 0 1 0 0)}) 
 \EE(y^{(0 0 1 0 0 0)}) 
 \EE(y^{(0 0 0 0 0 1)}) 
 \EE(y^{(0 0 0 1 0 0)}).
\end{split}
\end{equation*}
is the product of quantum dilogarithms,
$\EE(y^\alpha)=\EE(y^{\alpha};q)$.

\subsection{Alternating quivers}

A vertex $i$ of a quiver is a \emph{source} (respectively, a
\emph{sink}) if there are no arrows $\alpha$ with target $i$
(respectively, with source $i$).  A quiver is \emph{alternating} if each
of its vertices is a source or a sink.  Denote by $Q_{0}^{+}$
($Q_{0}^{-}$) the set of all sources (sinks) of the alternating quiver
$Q$, respectively. Since $Q_{0}=Q_{0}^{+}\sqcup Q_{0}^{-}$, the
underlying graph $\underline{Q}$, a graph obtained by forgetting the
orientation of arrows, is bipartite.

For an alternating quiver $Q$, there is a simple recipe for constructing
a reddening sequence/loop.

\begin{prop}
 \label{prop:alter-max-green} Suppose $Q$ is an alternating quiver, and
 $\seq{m}_{\pm}$ be arbitrary permutations of $Q_{0}^{\pm}$,
 respectively. Let $\seq{m}=\seq{m}_{+}\seq{m}_{-}$ be their
 concatenation, considered as a mutation sequence
 of length $n=|Q_{0}|$.  Then, the
 $c$-vectors of $Q(t)$ are given by
  \begin{equation}
   \label{eq:alt-quiv-c-evol}
   c_{i}(t)=
    \begin{cases}
   e_{i}& \text{if~} i\not\in \{m_{1},\dots,m_{t}\},
     \\
   -e_{i}& \text{if~} i\in \{m_{1},\dots,m_{t}\},
    \end{cases}
    \qquad (0\leq t\leq n).
  \end{equation}
 In particular, the sequence $\seq{m}$ is maximal green and
 $\gamma=(Q;\seq{m}_{+}\seq{m}_{-},\id)$ is a reddening mutation
 loop.
\end{prop}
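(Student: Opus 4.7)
The plan is to proceed by induction on $t$, simultaneously tracking two statements: (i) the $c$-vectors obey the formula \eqref{eq:alt-quiv-c-evol}, and (ii) the next vertex to be mutated, $m_{t+1}$, is a source of the current quiver $Q(t)$. Given (ii), Lemma \ref{prop:c-vec-change} gives the inductive step for (i) almost for free: since $m_{t+1}$ is a source of $Q(t)$, we have $Q(t)_{i,m_{t+1}} = 0$ for every $i$, so the third column of the green-mutation rule becomes trivial, yielding $c_i(t+1) = c_i(t)$ for $i \neq m_{t+1}$ and $c_{m_{t+1}}(t+1) = -c_{m_{t+1}}(t) = -e_{m_{t+1}}$ (using the inductive hypothesis that $m_{t+1}\notin\{m_1,\dots,m_t\}$ means $c_{m_{t+1}}(t)=e_{m_{t+1}}$). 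This also shows the mutation is indeed green, as required.

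The real content is statement (ii), which I would establish combinatorially by tracking the evolution of the quiver itself. The key elementary observation is that mutating at a source never creates new arrows: step 1 of the mutation rule adds arrows only from length-two paths $i\to v\to j$, and a source admits no such paths. Thus mutation at a source $v$ only reverses its incident arrows, turning $v$ into a sink while leaving all other arrows untouched. I will use this to argue in two phases. During $\seq{m}_+$, at each step $j\le |Q_0^+|$ the set of already-mutated vertices lies inside $Q_0^+$ and has become a set of sinks, while the remaining vertices of $Q_0^+$ retain their original outgoing-only arrows to $Q_0^-$; hence the next vertex $m_{j+1}\in Q_0^+$ is still a source. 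After $\seq{m}_+$ completes, every arrow of the original quiver has been reversed exactly once, so $Q_0^-$ is now the set of sources. The same argument applied to $\seq{m}_-$ then handles the second phase.

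With (i) and (ii) in hand, the remaining assertions are immediate. The formula \eqref{eq:alt-quiv-c-evol} at $t=n$ says $c_i(n) = -e_i$ for all $i$, so every vertex of $Q(n)$ is red and $\seq{m}$ is maximal green. Moreover, the quiver-level evolution above shows that every arrow is reversed exactly twice along $\seq{m}_+\seq{m}_-$ (once when its source-endpoint is mutated, once when its sink-endpoint is mutated), and no new arrows are ever introduced, so $Q(n) = Q(0)$ as quivers. Consequently the identity is a legitimate boundary condition and $\gamma=(Q;\seq{m}_+\seq{m}_-,\id)$ is a reddening (in fact maximal green) mutation loop.

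The only mildly delicate point I foresee is bookkeeping in phase (ii): one must be careful that, although vertices in $Q_0^-$ acquire outgoing arrows (to already-mutated sources) during $\seq{m}_+$, this does not spoil the source-property of the still-unmutated vertices in $Q_0^+$, since those outgoing arrows lie entirely within $Q_0^+\cup Q_0^-$ in a direction that does not touch the sources in $Q_0^+\setminus\{m_1,\dots,m_j\}$. Once this bipartite bookkeeping is spelled out explicitly, the entire proof is essentially a formal application of \eqref{eq:c-vec-change} and the definition of matrix mutation.
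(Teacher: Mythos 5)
Your proposal is correct and follows essentially the same route as the paper: first establish that $\seq{m}$ is a source sequence (so mutations only reverse arrows and $Q(t{-}1)_{i,m_t}=0$), then run the induction on $t$ via Lemma \ref{prop:c-vec-change}, with the maximal-green and loop assertions falling out at $t=n$. The only cosmetic difference is that you spell out the bipartite bookkeeping for the source-sequence claim in more detail than the paper, which simply observes that $\mu_{\seq{m}_{+}}(Q)=Q^{op}$.
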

\begin{proof}
 First note that $\seq{m}$ is a \emph{source sequence}, that is, each
 mutating vertex $m_{t}$ is a source of $Q(t-1)$ for all $1\leq t\leq
 n$. To see this, it is helpful to consider $\seq{m}_{+}$ and
 $\seq{m}_{-}$ separately. The claim is clear for the mutation sequence
 $\seq{m}_{+}$ applied on $Q$. When the mutation sequence $\seq{m}_{+}$
 is over, we have $\mu_{\seq{m}_{+}}(Q)=Q^{op}$; here $Q^{op}$ is the
 quiver obtained by reversing all the arrows in $Q$. Now all the
 vertices in $\seq{m}_{-}$ are sources of $\mu_{+}(Q)=Q^{op}$, so
 $\seq{m}_{-}$ is also a source sequence.

 Since only source vertices are mutated, mutation rules 1) and 3) are
 never used; mutations change only the orientations of arrows. The
 underlying graph $\underline{Q}$ remains the same.

 Let $M(t):=\{m_{1},\dots,m_{t}\}\subset Q_{0}$ be the set of mutated
 vertices during the first $t$ mutations. We prove
 \eqref{eq:alt-quiv-c-evol} by induction on $t$.  The claim holds for
 $t=0$, since $M(0)=\emptyset$ and $c_{i}(0)=e_{i}$ for all $i$. Suppose
 the claim is true for $0,1,\dots,t-1$. Then the mutation
 $\mu_{m_{t}}:Q(t{-}1)\to Q(t)$ is green because $m_{t}\not\in
 \{m_{1},\dots,m_{t-1}\}$ and thus $c_{m_{t}}(t{-}1)=e_{m_{t}}\in \N^{n}$
 by induction hypothesis.  Moreover, $Q(t{-}1)_{i,m_{t}}=0$ since $m_{t}$
 is a source of $Q(t{-}1)$, as we have seen above. Thus by
 \eqref{eq:c-vec-change}, the $c$-vectors change as
 \begin{equation*}
  c_{i}(t)=\begin{cases}
	    -c_{i}(t{-}1) & \text{if~} i=m_{t}, \\
	    c_{i}(t{-}1) & \text{if~} i\neq m_{t}. 
	   \end{cases}
 \end{equation*}
 With $M(t)=M(t{-}1)\sqcup \{m_{t}\}$, this shows that the claim is also
 true for $t$.  The rest of the proposition follows immediately from
 \eqref{eq:alt-quiv-c-evol}.
\end{proof}

Let us compute $\ZZ(\gamma)$ for the reddening loop
$\gamma=(Q;\seq{m}_{+}\seq{m}_{-},\id)$. Note that the sequence
$\seq{m}=(m_{1},\dots,m_{n})$ is a permutation of $(1,\dots,n)$. Every
vertex $i$ is mutated exactly once, and the initial and final
$s$-variables $s_{i}$, $s'_{i}$ are identified by the boundary condition
$\varphi=\id$.  As we will soon see, it is convenient to label
$k$-variables not by the mutation time but by the vertex label. From now
on, $k_{i}$ will denote the $k$-variable associated with the mutation at
vertex $i$, rather than $i$-th mutation.

To compute the weight for $\gamma$, it suffices to know the underlying
graph $\underline{Q}$, because we can recover arrow orientations from
the fact that ``every mutation occurs at a source''. All the information
of $\underline{Q}$ is encoded in the \emph{generalized Cartan matrix}
 \begin{equation}
  \label{eq:gen-Cartan}
   (C)_{ij}=
  \begin{cases}
   2 & \text{if~} i=j, \\
   -(Q_{ij}+Q_{ji}) & \text{if~} i\neq j.
  \end{cases}
 \end{equation}

Before stating the general result for $\ZZ(\gamma)$ (Theorem
\ref{thm:qdilog-char}), let us take an example --- an alternating quiver
of affine $D_{5}$ type:
\begin{equation}
 \label{eq:affineA5}
 Q=%\left(
\vcenter{\xygraph{!{<0cm,0cm>;<16pt,0mm>:<0cm,16pt>::}
!{(0,2)}*+<2pt>{1}="1"
!{(0,0)}*+<2pt>{2}="2"
!{(0.9,1)}*+<2pt>{3}="3"
!{(2.1,1)}*+<2pt>{4}="4"
!{(3,2)}*+<2pt>{5}="5"
!{(3,0)}*+<2pt>{6}="6"
"1":"3"  
"2":"3"  
"4":"3"  
"4":"5"  
"4":"6"  
}}%\right).
\;.
\end{equation}
The generalized Cartan matrix of $Q$ is given by
\begin{equation}
 \label{eq:GCM-D5(1)}
 \arraycolsep=2.8pt
   C=\begin{pmatrix}
      2 & 0 & -1 & 0 & 0 & 0 \\
      0 & 2 & -1 & 0 & 0 & 0 \\
      -1 & -1 & 2 & -1 & 0 & 0 \\
      0 & 0 & -1 & 2 & -1 & -1 \\
      0 & 0 & 0 & -1 & 2 & 0 \\
      0 & 0 & 0 & -1 & 0 & 2 \\
     \end{pmatrix}.
\end{equation}
Put $\seq{m}_{+}=(1,2,4)$ and $\seq{m}_{-}=(3,5,6)$. By Proposition
\ref{prop:alter-max-green}, the mutation sequence
\begin{equation*}
 \seq{m}=\seq{m}_{+}\seq{m}_{-}=(1,2,4, 3,5,6)
\end{equation*}
is maximal green, reddening sequence with the boundary condition
$\varphi=\id$ (see Figure \ref{fig:affineD5}).

\begin{figure}[t]
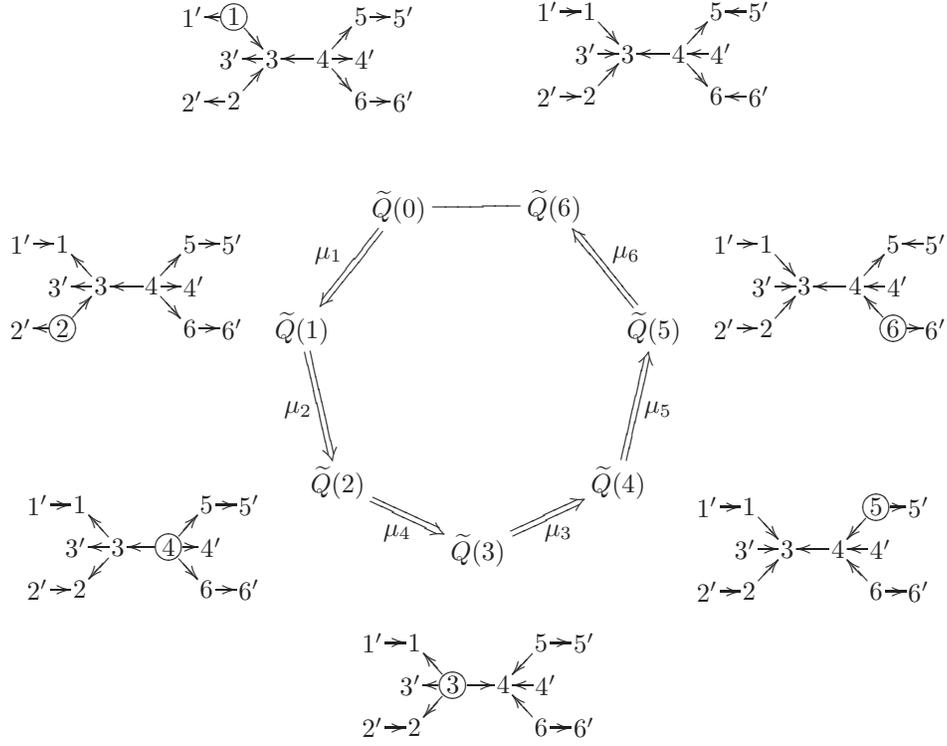

\begin{equation*}
 \xygraph{!{<0cm,0cm>;<12mm,0cm>:<0cm,12mm>::}
  %%%%%%%%%%%%%%%%%%%%%%%%%%%%%%%%%%%%%%%%%%%%%%%%%%%%%%
 !{(0,0);a(116)**{}?(2.0)}*+<5pt>{\widetilde{Q}(0)}="0"
 !{(0,0);a(120)**{}?(4)}*+<2pt>{
  \vcenter{\xybox{\xygraph{!{<0cm,0cm>;<16pt,0mm>:<0cm,16pt>::}
!{(0,2)}*+[o][F-]{1}="1"
!{(0,0)}*+<2pt>{2}="2"
!{(0.9,1)}*+<2pt>{3}="3"
!{(2.1,1)}*+<2pt>{4}="4"
!{(3,2)}*+<2pt>{5}="5"
!{(3,0)}*+<2pt>{6}="6"
!{(-1,2)}*+<2pt>{1'}="1'"
!{(-1,0)}*+<2pt>{2'}="2'"
!{(-0.1,1)}*+<2pt>{3'}="3'"
!{(3.1,1)}*+<2pt>{4'}="4'"
!{(4,2)}*+<2pt>{5'}="5'"
!{(4,0)}*+<2pt>{6'}="6'"
"1":"3"  
"2":"3"  
"4":"3"  
"4":"5"  
"4":"6"  
"1":"1'"  
"2":"2'"  
"3":"3'"  
"4":"4'"  
"5":"5'"  
"6":"6'"  
}}}}
  %%%%%%%%%%%%%%%%%%%%%%%%%%%%%%%%%%%%%%%%%%%%%%%%%%%%%%
%  !{(2,4)}*+<5pt>{\widetilde{Q}(1)}
%  !{(2,3)}*+<2pt>{
 !{(0,0);a(167)**{}?(2.0)}*+<5pt>{\widetilde{Q}(1)}="1"
 !{(0,0);a(167)**{}?(4)}*+<2pt>{
  \vcenter{\xybox{\xygraph{!{<0cm,0cm>;<16pt,0mm>:<0cm,16pt>::}
!{(0,2)}*+<2pt>{1}="1"
!{(0,0)}*+[o][F-]{2}="2"
!{(0.9,1)}*+<2pt>{3}="3"
!{(2.1,1)}*+<2pt>{4}="4"
!{(3,2)}*+<2pt>{5}="5"
!{(3,0)}*+<2pt>{6}="6"
!{(-1,2)}*+<2pt>{1'}="1'"
!{(-1,0)}*+<2pt>{2'}="2'"
!{(-0.1,1)}*+<2pt>{3'}="3'"
!{(3.1,1)}*+<2pt>{4'}="4'"
!{(4,2)}*+<2pt>{5'}="5'"
!{(4,0)}*+<2pt>{6'}="6'"
"3":"1"
"2":"3"  
"4":"3"  
"4":"5"  
"4":"6"  
"1'":"1"  
"2":"2'"  
"3":"3'"
"4":"4'"  
"5":"5'"  
"6":"6'"  
}}}}
  %%%%%%%%%%%%%%%%%%%%%%%%%%%%%%%%%%%%%%%%%%%%%%%%%%%%%%
%  !{(4,4)}*+<5pt>{\widetilde{Q}(2)}
%  !{(4,3)}*+<2pt>{
 !{(0,0);a(219)**{}?(2.0)}*+<5pt>{\widetilde{Q}(2)}="2"
 !{(0,0);a(208)**{}?(4.2)}*+<2pt>{
  \vcenter{\xybox{\xygraph{!{<0cm,0cm>;<16pt,0mm>:<0cm,16pt>::}
!{(0,2)}*+<2pt>{1}="1"
!{(0,0)}*+<2pt>{2}="2"
!{(0.9,1)}*+<2pt>{3}="3"
!{(2.1,1)}*+[o][F-]{4}="4"
!{(3,2)}*+<2pt>{5}="5"
!{(3,0)}*+<2pt>{6}="6"
!{(-1,2)}*+<2pt>{1'}="1'"
!{(-1,0)}*+<2pt>{2'}="2'"
!{(-0.1,1)}*+<2pt>{3'}="3'"
!{(3.1,1)}*+<2pt>{4'}="4'"
!{(4,2)}*+<2pt>{5'}="5'"
!{(4,0)}*+<2pt>{6'}="6'"
"3":"1"
"3":"2"  
"4":"3"  
"4":"5"  
"4":"6"  
"1'":"1"  
"2'":"2"  
"3":"3'"
"4":"4'"  
"5":"5'"  
"6":"6'"  
}}}}
  %%%%%%%%%%%%%%%%%%%%%%%%%%%%%%%%%%%%%%%%%%%%%%%%%%%%%%
%  !{(4,2)}*+<5pt>{\widetilde{Q}(3)}
%  !{(4,1)}*+<2pt>{
 !{(0,0);a(270)**{}?(2.0)}*+<5pt>{\widetilde{Q}(3)}="3"
 !{(0,0);a(270)**{}?(3.5)}*+<2pt>{
  \vcenter{\xybox{\xygraph{!{<0cm,0cm>;<16pt,0mm>:<0cm,16pt>::}
!{(0,2)}*+<2pt>{1}="1"
!{(0,0)}*+<2pt>{2}="2"
!{(0.9,1)}*+[o][F-]{3}="3"
!{(2.1,1)}*+<2pt>{4}="4"
!{(3,2)}*+<2pt>{5}="5"
!{(3,0)}*+<2pt>{6}="6"
!{(-1,2)}*+<2pt>{1'}="1'"
!{(-1,0)}*+<2pt>{2'}="2'"
!{(-0.1,1)}*+<2pt>{3'}="3'"
!{(3.1,1)}*+<2pt>{4'}="4'"
!{(4,2)}*+<2pt>{5'}="5'"
!{(4,0)}*+<2pt>{6'}="6'"
"3":"1"
"3":"2"  
"3"  :"4"
"5"  :"4"
"6"  :"4"
"1'":"1"  
"2'":"2"  
"3":"3'"
"4'":"4"  
"5":"5'"  
"6":"6'"  
}}}}
  %%%%%%%%%%%%%%%%%%%%%%%%%%%%%%%%%%%%%%%%%%%%%%%%%%%%%%
%  !{(4,-2)}*+<5pt>{\widetilde{Q}(4)}
%  !{(4,-1)}*+<2pt>{
 !{(0,0);a(321)**{}?(2.0)}*+<5pt>{\widetilde{Q}(4)}="4"
 !{(0,0);a(332)**{}?(4.2)}*+<2pt>{
  \vcenter{\xybox{\xygraph{!{<0cm,0cm>;<16pt,0mm>:<0cm,16pt>::}
!{(0,2)}*+<2pt>{1}="1"
!{(0,0)}*+<2pt>{2}="2"
!{(0.9,1)}*+<2pt>{3}="3"
!{(2.1,1)}*+<2pt>{4}="4"
!{(3,2)}*+[o][F-]{5}="5"
!{(3,0)}*+<2pt>{6}="6"
!{(-1,2)}*+<2pt>{1'}="1'"
!{(-1,0)}*+<2pt>{2'}="2'"
!{(-0.1,1)}*+<2pt>{3'}="3'"
!{(3.1,1)}*+<2pt>{4'}="4'"
!{(4,2)}*+<2pt>{5'}="5'"
!{(4,0)}*+<2pt>{6'}="6'"
"1":"3"
"2":"3"  
"4"  :"3"
"5"  :"4"
"6"  :"4"
"1'":"1"  
"2'":"2"  
"3'":"3"
"4'":"4"  
"5":"5'"  
"6":"6'"  
}}}}
  %%%%%%%%%%%%%%%%%%%%%%%%%%%%%%%%%%%%%%%%%%%%%%%%%%%%%%
%  !{(2,-2)}*+<5pt>{\widetilde{Q}(5)}
%  !{(2,-1)}*+<2pt>{
 !{(0,0);a(13)**{}?(2.0)}*+<5pt>{\widetilde{Q}(5)}="5"
 !{(0,0);a(13)**{}?(4)}*+<2pt>{
  \vcenter{\xybox{\xygraph{!{<0cm,0cm>;<16pt,0mm>:<0cm,16pt>::}
!{(0,2)}*+<2pt>{1}="1"
!{(0,0)}*+<2pt>{2}="2"
!{(0.9,1)}*+<2pt>{3}="3"
!{(2.1,1)}*+<2pt>{4}="4"
!{(3,2)}*+<2pt>{5}="5"
!{(3,0)}*+[o][F-]{6}="6"
!{(-1,2)}*+<2pt>{1'}="1'"
!{(-1,0)}*+<2pt>{2'}="2'"
!{(-0.1,1)}*+<2pt>{3'}="3'"
!{(3.1,1)}*+<2pt>{4'}="4'"
!{(4,2)}*+<2pt>{5'}="5'"
!{(4,0)}*+<2pt>{6'}="6'"
"1":"3"
"2":"3"  
"4"  :"3"
"4"  :"5"
"6"  :"4"
"1'":"1"  
"2'":"2"  
"3'":"3"
"4'":"4"  
"5'":"5"  
"6":"6'"  
}}}}
  %%%%%%%%%%%%%%%%%%%%%%%%%%%%%%%%%%%%%%%%%%%%%%%%%%%%%%
%  !{(0,-2)}*+<5pt>{\widetilde{Q}(6)}
%  !{(0,-1)}*+<2pt>{
 !{(0,0);a(65)**{}?(2.0)}*+<5pt>{\widetilde{Q}(6)}="6"
 !{(0,0);a(61)**{}?(4)}*+<2pt>{
  \vcenter{\xybox{\xygraph{!{<0cm,0cm>;<16pt,0mm>:<0cm,16pt>::}
!{(0,2)}*+<2pt>{1}="1"
!{(0,0)}*+<2pt>{2}="2"
!{(0.9,1)}*+<2pt>{3}="3"
!{(2.1,1)}*+<2pt>{4}="4"
!{(3,2)}*+<2pt>{5}="5"
!{(3,0)}*+<2pt>{6}="6"
!{(-1,2)}*+<2pt>{1'}="1'"
!{(-1,0)}*+<2pt>{2'}="2'"
!{(-0.1,1)}*+<2pt>{3'}="3'"
!{(3.1,1)}*+<2pt>{4'}="4'"
!{(4,2)}*+<2pt>{5'}="5'"
!{(4,0)}*+<2pt>{6'}="6'"
"1":"3"
"2":"3"  
"4"  :"3"
"4"  :"5"
"4"  :"6"
"1'":"1"  
"2'":"2"  
"3'":"3"
"4'":"4"  
"5'":"5"  
"6'":"6"  
}}}}
 "0":@{=>}_{\textstyle \mu_{1}}"1"
 "1":@{=>}_{\textstyle \mu_{2}}"2"
 "2":@{=>}_{\textstyle \mu_{4}}"3"
 "3":@{=>}_{\textstyle \mu_{3}}"4"
 "4":@{=>}_{\textstyle \mu_{5}}"5"
 "5":@{=>}_{\textstyle \mu_{6}}"6"
 "6":@{-}"0"
}
\end{equation*}
\caption{Reddening mutation loop for affine $D_{5}$ alternating
 quiver. Mutating vertices are marked with circles.}  \label{fig:affineD5}
\end{figure}

The $s$-variables change as follows:
\begin{equation*}
 \renewcommand{\arraystretch}{1.1}
 \begin{array}{c|c|c|c|c|c|c}
  & 1 & 2 & 3 & 4 & 5 & 6 \\
  \hline
 Q(0) & s_1 & s_2 & s_3 & s_4 & s_5 & s_6 \\
 Q(1) & s'_1{=}k_1{-}s_1 & s_2 & s_3 & s_4 & s_5 & s_6 \\
 Q(2) & s'_1{=}k_1{-}s_1 & s'_2{=}k_2{-}s_2 & s_3 & s_4 & s_5 & s_6 \\
 Q(3) & s'_1{=}k_1{-}s_1 & s'_2{=}k_2{-}s_2 & s_3 
  & s'_4{=}k_4{-}s_4 & s_5 & s_6 \\
 Q(4) & s'_1{=}k_1{-}s_1 & s'_2{=}k_2{-}s_2 & s'_3{=}k_3{-}s_3
  & s'_4{=}k_4{-}s_4 & s_5 & s_6 \\
 Q(5) & s'_1{=}k_1{-}s_1 & s'_2{=}k_2{-}s_2 & s'_3{=}k_3{-}s_3 
  & s'_4{=}k_4{-}s_4 & s'_5{=}k_5{-}s_5 & s_6 \\
 Q(6) & s'_1{=}k_1{-}s_1 & s'_2{=}k_2{-}s_2 & s'_3{=}k_3{-}s_3 
  & s'_4{=}k_4{-}s_4 & s'_5{=}k_5{-}s_5 & s'_6{=}k_6{-}s_6 \\
\end{array}
\end{equation*}
%(Note that $k_{i}$ denotes the $k$-variable associated the mutation at
%vertex $i$, instead of the $i$-th mutation.)

The boundary condition
$\varphi=\id$ imposes $s_{i}= s_{i}'=k_i-s_i$ for all $i$, so we have
\begin{equation*}
 s_i = s_{i}'= \frac{1}{2} k_{i} \qquad (i=1,\dots,6).
\end{equation*}
The $k^{\vee}$-variables (also labeled by mutated vertices) are then
given by
\begin{equation*}
\renewcommand{\arraystretch}{1.2}
\arraycolsep=2pt
\begin{array}{rclcl}
 k^{\vee}_1 &=& s_{1}+s'_{1}-s_3 &=& k_1-\frac{k_3}{2}, \\
 k^{\vee}_2 &=& s_{2}+s'_{2}-s_3 &=& k_2-\frac{k_3}{2}, \\
 k^{\vee}_4 &=& s_{4}+s'_{4}-(s_3+s_5+s_6)
  &=& -\frac{k_3}{2}+k_4-\frac{k_5}{2}-\frac{k_6}{2}, \\
 k^{\vee}_3 &=& s_{3}+s'_{3}-(s'_1+s'_2+s'_4) &=& 
  -\frac{k_1}{2}-\frac{k_2}{2}+k_3-\frac{k_4}{2}, \\
 k^{\vee}_5 &=& s_{5}+s'_{5}-s'_4 &=& -\frac{k_4}{2}+k_5, \\
 k^{\vee}_6 &=& s_{6}+s'_{6}-s'_4 &=& -\frac{k_4}{2}+k_6. \\
\end{array}
\end{equation*}
Thus the the weight of the mutation now reads
\begin{equation*}
 W(\seq{m})=\frac{q^{\frac{1}{2}\sum_{i}k_{i}k^{\vee}_{i}}}
{\prod_{i=1}^{6}(q)_{k_{i}}}
 =
\frac{q^{\frac{1}{2}\left(k_1^2+k_2^2+k_3^2+k_4^2+k_5^2+k_6^2- k_1 k_3-k_2 k_3-k_3
		     k_4-k_4 k_5-k_4 k_6\right)}}{
\prod_{i=1}^{6}(q)_{k_{i}}}.
\end{equation*}
%As is seen in Figure \ref{fig:affineD5}, 
Every mutating vertex $m_{t}$ is green with $c$-vector
$\alpha_{t}=e_{m_{t}}$. The $\N^{n}$-grading of the mutation sequence
$\seq{m}$ is then
\begin{equation*}
 \sum_{t=1}^{6} k_{m_{t}} e_{m_{t}} =
 \sum_{i=1}^{6} k_{i} e_{i} =
 \left(k_1,k_2,k_3,k_4,k_5,k_6\right) =\seq{k}\in \N^{6}.
\end{equation*} 
Combining all these, we obtain a neat expression for the partition
$q$-series:
\begin{equation}
 \label{eq:Z-D5(1)}
 \ZZ(\gamma)=\sum_{\seq{k}\in \N^{6}}
  \frac{q^{\frac{1}{4}\seq{k}^{\top}C\seq{k}}}{\prod_{i=1}^{6} (q)_{k_i}}
  y^{\seq{k}},
\end{equation}
where $C$ is nothing but the generalized Cartan matrix \eqref{eq:GCM-D5(1)}.  

In fact, this generalize to all alternating quivers:
\begin{theorem}
 \label{thm:qdilog-char} Suppose $Q$ is an alternating quiver, and
 $\gamma=(Q;\seq{m},\id)$ be the reddening mutation loop
 constructed as in Proposition \ref{prop:alter-max-green}. Let
 $\seq{k}=(k_{1},\dots,k_{n})$ be the vector of $k$-variables indexed by
 the vertices. Then the partition $q$-series is given by
 \begin{equation}
  \label{eq:qdilog-char}
  \ZZ(\gamma)= \overline{ \EE(\seq{m};q)}
  =\sum_{\seq{k}\in \N^{n}} 
  \frac{q^{\frac{1}{4} \seq{k}^{\top}C\,\seq{k}}}{\prod_i (q)_{k_{i}}}
  y^{\seq{k}},
 \end{equation}
 where $C$ is the generalized Cartan matrix of $Q$ given in
 \eqref{eq:gen-Cartan}.
\end{theorem}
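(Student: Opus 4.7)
My plan is to prove Theorem~\ref{thm:qdilog-char} by combining the general identity $\ZZ(\gamma) = \overline{\EE(Q;\seq{m})}$ of Theorem~\ref{thm:main} with an explicit computation of the right-hand side, specialized to the alternating case. Since Proposition~\ref{prop:alter-max-green} tells us that $\gamma = (Q;\seq{m}_+\seq{m}_-,\id)$ is a reddening mutation loop, the first equality in \eqref{eq:qdilog-char} is immediate from Theorem~\ref{thm:main}, so all the real work lies in identifying $\overline{\EE(Q;\seq{m})}$ with the stated sum.

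To obtain the explicit form, I would substitute the data $\varepsilon_t = +1$ and $\alpha_t = e_{m_t}$ (from \eqref{eq:alt-quiv-c-evol}) into the expansion \eqref{eq:E-bar-rev} of $\overline{\EE(Q;\seq{m})}$ derived in the proof of Theorem~\ref{thm:main}. Because every vertex is mutated exactly once, I would re-index the $k$-variables by vertex, writing $k_i$ for the $k$-variable associated with the unique mutation at vertex $i$. Under this relabeling the Pochhammer denominator becomes $\prod_i (q)_{k_i}$ and the grading $y^{\sum_t k_t\alpha_t}$ becomes $y^{\seq{k}}$; only the exponent of $q$ requires genuine work.

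The exponent from \eqref{eq:E-bar-rev} specializes to $\tfrac12\sum_i k_i^2 - \tfrac12\sum_{s<t}\vev{\alpha_s,\alpha_t}\,k_{m_s}k_{m_t}$, while $\tfrac14\seq{k}^\top C\seq{k} = \tfrac12\sum_i k_i^2 - \tfrac12\sum_{i<j}(Q_{ij}+Q_{ji})k_i k_j$ by \eqref{eq:gen-Cartan}. The diagonal terms match; the cross terms require exploiting the bipartite structure. Since $\vev{e_p,e_r}=B_{pr}=Q_{pr}-Q_{rp}$ vanishes whenever $p,r$ are both sources or both sinks of $Q$, and since $\seq{m}=\seq{m}_+\seq{m}_-$ places all source mutations before all sink mutations, the surviving terms in the mutation-order sum come exactly from pairs $s<t$ with $m_s\in Q_0^+$ and $m_t\in Q_0^-$, for which $\vev{\alpha_s,\alpha_t}=Q_{m_s,m_t}$. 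Summing over such pairs $(p,r)$ yields $\sum_{p\in Q_0^+,\, r\in Q_0^-} Q_{p,r}k_pk_r$, which is the same as $\sum_{i<j}(Q_{ij}+Q_{ji})k_ik_j$ because in an alternating quiver $Q_{ij}+Q_{ji}$ is the underlying-graph edge multiplicity and is nonzero only on source-sink pairs.

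The main obstacle is really just careful bookkeeping: two different orderings (the mutation order $1,\dots,n$ and the vertex label order) enter the two sides, and one must verify that the bipartition $Q_0=Q_0^+\sqcup Q_0^-$ collapses both of them to the same sum over source-sink edges. As an alternative route that avoids Theorem~\ref{thm:main}, one could compute $\ZZ(\gamma)$ from first principles: Lemma~\ref{prop:rev-end}(ii) gives $s_i(0)=k_i/2$ directly, and Proposition~\ref{prop:kkv-k2-rel} combined with $\vev{\psi(0),\psi(T)}=-\vev{\psi(0),\psi(0)}=0$ (from Lemma~\ref{prop:rev-end}(i) plus skewness) reduces $\sum_t k_tk_t^\vee$ to $\sum_t k_t^2 - \sum_{s<t}\vev{\alpha_s,\alpha_t}k_{m_s}k_{m_t}$, and the identical bipartite bookkeeping then recovers \eqref{eq:qdilog-char}.
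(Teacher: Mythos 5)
Your proof is correct, but you compute the opposite side of the identity from the one the paper computes. The paper's proof works entirely on the partition-$q$-series side: it uses the source property and $\varphi=\id$ to get $s_i=k_i/2$ directly from $k_i=s_i+s_i'$ (no subtracted terms, since a source has no incoming arrows), deduces $k^{\vee}_i=k_i-\tfrac12\sum_{i\sim j}k_j$, i.e.\ $\seq{k}^{\vee}=\tfrac12 C\seq{k}$, so the weight is immediately $q^{\frac14\seq{k}^{\top}C\seq{k}}/\prod_i(q)_{k_i}$, and then cites Theorem \ref{thm:main} for the equality with $\overline{\EE(\seq{m};q)}$. You instead take $\ZZ(\gamma)=\overline{\EE(Q;\seq{m})}$ from Theorem \ref{thm:main} as the starting point and expand the dilogarithm product via \eqref{eq:E-bar-rev}, matching the quadratic form $\tfrac12\sum_t k_t^2-\tfrac12\sum_{s<t}\vev{\alpha_s,\alpha_t}k_{m_s}k_{m_t}$ against $\tfrac14\seq{k}^{\top}C\seq{k}$ by the bipartite bookkeeping ($\vev{e_p,e_r}=0$ on source--source and sink--sink pairs, and $\vev{e_p,e_r}=Q_{pr}$ on the source-before-sink pairs that the ordering $\seq{m}_+\seq{m}_-$ produces); this bookkeeping is correct. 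The trade-off: the paper's route yields the clean intermediate identity $\seq{k}^{\vee}=\tfrac12 C\seq{k}$ and exercises the $s$-variable machinery, whereas yours needs only the $c$-vectors $\alpha_t=e_{m_t}$ and the ordering, at the cost of an ordering-versus-labelling argument that the paper avoids. Your sketched alternative via Lemma \ref{prop:rev-end}(ii) and Proposition \ref{prop:kkv-k2-rel} is also valid, though it essentially re-runs the proof of Theorem \ref{thm:main} in this special case rather than bypassing it.
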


\begin{proof}
 From Proposition \ref{prop:alter-max-green}, $\varepsilon_{t}=1$ and
 $c_{m_{t}}(t-1)=e_{m_{t}}$ for all mutation time $1\leq t\leq n$. Thus
 we have $\alpha_{t}=e_{m_{t}}$ in \eqref{eq:alpha-def}.  The
 $\N^{n}$-grading is therefore given by $\sum_{t=1}^{n} k_{m_{t}}
 e_{m_{t}}=\sum_{i=1}^{n} k_{i} e_{i} =\seq{k}$.

 Consider a mutation at vertex $i$. As we have seen, $i$ is a
 source and there is no arrow ending on $i$. The initial (= final)
 $s$-variable and the $k$-variable are thus related as $2s_{i}=k_{i}$, so we
 have
 \begin{equation}
  s_{i}=\frac{k_{i}}{2}\qquad (1\leq i\leq n).
 \end{equation}
 The $k^{\vee}$-variables are then expressed as 
 \begin{align}
  \label{eq:alt-kv}
  k^{\vee}_{i} 
  &= 2 s_{i} - \sum_{i\to j} s_{j} 
  = 2 s_{i} - \sum_{i\sim j} s_{j}   = k_{i} - \frac{1}{2} \sum_{i\sim
  j} k_{j}.
 \end{align}
 Here $i\sim j$ means the vertices $i$ and $j$ are adjacent in the
 underlying graph $\underline{Q}$. Using the generalized Cartan matrix
 \eqref{eq:gen-Cartan}, the relation \eqref{eq:alt-kv} is concisely
 written as
 \begin{equation}
  \seq{k}^{\vee} = \frac{1}{2} C\seq{k}.
 \end{equation}
 Thus the partition $q$-series is given by
 \begin{equation}
  \label{eq:alt-zz}
  \ZZ(\gamma)  =
   \sum_{k_1,\dots,k_{n} \geq 0}\biggl(
   \prod_{i=1}^{n}
  \frac{q^{\frac{1}{2}k_{i}k^{\vee}_{i}}}{ (q)_{k_{i}}}
  \biggr)
  y^{\seq{k}}
  =
  \sum_{\seq{k}\in \N^{n}}
  \frac{q^{\frac{1}{4} \seq{k}^{\top}C\,\seq{k}}}{\prod_i (q)_{k_{i}}}
  y^{\seq{k}}.
 \end{equation}
 The equality $ \ZZ(\gamma)=\overline{\EE(\seq{m};q)}$ follows from
 Theorem \ref{thm:main}.
\end{proof}

\begin{rem}
 In our previous work, we computed the partition $q$-series for the
 product of Dynkin quivers and observed that they are fermionic
 character formulas of certain conformal field theories (Theorem 6.1 of
 \cite{KT2014}). The case considered here are different from those
 because (i) $Q\square Q'$ is not alternating in general, and (ii) the
 sequences given in \cite{KT2014} are not reddening. However,
 $Q\square Q'=X \square A_{1}$ with $C_{Q'}=(2)$ are exceptional cases
 to which Theorem \ref{thm:main} is applicable.
\end{rem}

%\begin{rem}
% The paper \cite{Cecotti2010} propose a relation between
% four-dimensional gauge theories and parafermionic conformal field
% theories. In particular, they claim that the $L^{2}$-trace of a special
% product of quantum dilogarithms associated with a Dynkin diagram is
% written in terms of characters. It would be interesting to find a
% precise relation with their work.
%\end{rem}

\appendix

\section{Some identities related with quantum dilogarithm}

\begin{prop}
 \label{prop:EqEqinv}
 \begin{equation}
  \label{eq:EqEqinv}
  \EE(y;q)\EE(y;q^{-1})=1
 \end{equation}
\end{prop}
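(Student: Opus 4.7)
The plan is to use the exponential representation \eqref{eq:E=exp()} of the quantum dilogarithm. Regarding both factors in \eqref{eq:EqEqinv} as formal power series in the single commuting variable $y$ (with coefficients in $\Q(q^{1/2})$ or its suitable extension), the factor $\EE(y;q)$ has constant term $1$, so $\log \EE(y;q)$ is a well-defined formal series with no constant term, and \eqref{eq:E=exp()} reads
\begin{equation*}
\log \EE(y;q) = \sum_{k=1}^{\infty} \frac{(-y)^k}{k(q^{-k/2} - q^{k/2})}.
\end{equation*}
Replacing $q$ by $q^{-1}$ flips the sign of every denominator $q^{-k/2} - q^{k/2}$, hence $\log \EE(y;q^{-1}) = -\log \EE(y;q)$. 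Since $y$ commutes with itself, the two exponentials commute and their product collapses to $\exp\bigl(\log \EE(y;q) + \log \EE(y;q^{-1})\bigr) = \exp(0) = 1$, which is exactly \eqref{eq:EqEqinv}.

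As a sanity check one can argue directly from the infinite product/series representations in \eqref{eq:E=exp()}: Euler's $q$-binomial identity $\sum_{n\geq 0}\frac{q^{n^2/2}}{(q)_n} y^n = \prod_{n\geq 0}(1+q^{n+1/2}y)$ identifies the series form of $\EE(y;q^{-1})$ with $\prod_{n\geq 0}(1+q^{n+1/2}y)$, which is literally the formal inverse of the infinite product representation of $\EE(y;q)$. The telescoping is then immediate.

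The only thing to be careful about is the ambient ring in which $\log$ and $\exp$ are taken; working with formal power series in $y$ whose coefficients are rational functions of $q^{1/2}$ makes every step unambiguous, and no convergence hypothesis on $q$ is required. There is no substantive obstacle beyond this formal bookkeeping.
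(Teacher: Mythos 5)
Your argument is correct and is essentially the paper's own proof: the authors likewise observe that exchanging $q\leftrightarrow q^{-1}$ in the exponential form \eqref{eq:E=exp()} negates the exponent, so the (commuting) product is $\exp(0)=1$. Your additional cross-check via Euler's $q$-exponential identity and the infinite product is a valid but inessential supplement.
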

\begin{proof} This follows from, for example by 
 exchanging $q\leftrightarrow q^{-1}$ in the expression
 \eqref{eq:E=exp()}.
\end{proof}
\begin{cor} 
 \label{prop:EEinv-cor}
 \begin{equation}
  \label{eq:EEinv-cor} \sum_{\substack{r,s\geq 0\\r+s=n}}
  \frac{q^{\frac{1}{2}r^{2}}}{(q)_{r}}
  \frac{q^{-\frac{1}{2}s^{2}}}{(q^{-1})_{s}}=\delta_{n,0}
  \qquad (n=0,1,2,\dots).
 \end{equation} 
\end{cor}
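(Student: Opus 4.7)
The plan is to obtain the identity by extracting the coefficient of $y^n$ from the product $\EE(y;q^{-1})\EE(y;q)=1$, which is guaranteed by Proposition \ref{prop:EqEqinv}. Since this product involves only a single variable $y$, there is no noncommutative subtlety: the two series can be multiplied term by term as ordinary formal power series in $y$.

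Concretely, I would substitute the explicit series expansions
\begin{equation*}
\EE(y;q^{-1})=\sum_{r\ge 0}\frac{q^{\frac12 r^{2}}}{(q)_{r}}\,y^{r},
\qquad
\EE(y;q)=\sum_{s\ge 0}\frac{q^{-\frac12 s^{2}}}{(q^{-1})_{s}}\,y^{s},
\end{equation*}
multiply them out to obtain
\begin{equation*}
\EE(y;q^{-1})\EE(y;q)=\sum_{n\ge 0}\left(\sum_{\substack{r,s\ge 0\\ r+s=n}}\frac{q^{\frac12 r^{2}}}{(q)_{r}}\cdot\frac{q^{-\frac12 s^{2}}}{(q^{-1})_{s}}\right)y^{n},
\end{equation*}
and then invoke Proposition \ref{prop:EqEqinv} (applied with $q$ and $q^{-1}$ swapped, which leaves the statement $\EE(y;q)\EE(y;q^{-1})=1$ unchanged) to identify the left-hand side with $1=\sum_n\delta_{n,0}\,y^n$. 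Matching coefficients of $y^n$ yields the claimed identity.

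There is essentially no obstacle: the only point that deserves a sentence is the verification that the two factors commute so that the Cauchy product is legal, which is immediate because $y$ is a single formal variable (equivalently, $\vev{e,e}=0$ in $\A_Q$ for any single grading vector, so no $q$-twist appears). Hence the whole argument is a one-line expansion together with a reference to Proposition \ref{prop:EqEqinv}.
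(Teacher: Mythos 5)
Your proposal is correct and coincides with the paper's own (first) proof: the authors likewise obtain \eqref{eq:EEinv-cor} by expanding the product in Proposition \ref{prop:EqEqinv} as a power series in the single commuting variable $y$ and extracting the coefficient of $y^{n}$. (The paper also records a second, independent derivation via the $q$-binomial formula at $x=-1$, but that is offered only as an alternative.)
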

\begin{proof} This is proved by expanding \eqref{eq:EqEqinv} as a series
 in $y$, and taking the coefficient of $y^{n}$. An alternative proof
 goes as follows.  We begin by the $q$-binomial formula
 \begin{equation}
  \label{eq:q-binom}
   \prod_{k=0}^{n-1} (1+q^kx)=\sum_{r=0}^n q^{\frac{r(r-1)}{2}} 
   %   \binom{n}{r}_q 
   \frac{(q)_{n}}{(q)_{r}(q)_{n-r}}
   x^r.
 \end{equation}
 The right hand side of \eqref{eq:q-binom} can be written as 
 \begin{equation*}
  \sum_{\substack{r,s\geq 0\\r+s=n}} q^{\frac{r(r-1)}{2}}
   \frac{(q)_{n}}{(q)_{r}(q)_{s}} x^{r} =\sum_{\substack{r,s\geq
   0\\r+s=n}} \frac{(q)_{n}}{(q)_{r}(q^{-1})_{s}}
   q^{\frac{r(r-1)}{2}-\frac{s(s+1)}{2}} (-1)^{s} x^{r}.
 \end{equation*}
 By putting $x=-1$ into \eqref{eq:q-binom}, we have
 \begin{equation}
  \label{eq:LHS} \prod_{k=0}^{n-1} (1-q^k) =(-1)^{n} q^{-\frac{n}{2}}
   (q)_{n}\sum_{\substack{r,s\geq 0\\r+s=n}} \frac{1}
   {(q)_{r}(q^{-1})_{s}} q^{\frac{r^{2}-s^{2}}{2}}.
 \end{equation}
 The left hand side of \eqref{eq:LHS} is 1 if $n=0$, and 0 otherwise.
\end{proof}

%\bibliographystyle{plain}
%\bibliography{mutation}

\end{document}